\documentclass[12pt,final,
thmsa,%titlepage,
epsf,a4paper]%{article}
{article}
\usepackage{booktabs}

\newtheorem{theorem}{Theorem}%[chapter]

\newtheorem{corollary}{Corollary}%[chapter]
\newtheorem{definition}{Definition}%[chapter]
\newtheorem{example}{Example}%[chapter]
\newtheorem{lemma}{Lemma}%[chapter]
\newtheorem{proposition}{Proposition}%[chapter]
\newtheorem{remark}{Remark}%[chapter]
\newtheorem{claim}{Claim}%[chapter]

\newenvironment{proof}[1][Proof]{\emph{#1.} }{\  \hfill $\square $ \vspace{5 pt}}

\usepackage{natbib}
\usepackage{amssymb}
\usepackage[dvips]{graphics}
\usepackage{amsmath}

\usepackage{mathpazo}
\usepackage{enumerate}

\usepackage{amsfonts}

\usepackage{amssymb}

\usepackage{enumerate}

\usepackage{mathrsfs}

\usepackage[usenames]{color}
\usepackage{cancel}

\usepackage{pgf,tikz}

\usetikzlibrary{intersections}
\usetikzlibrary{decorations.pathreplacing,angles,quotes}
\usetikzlibrary{arrows.meta}
\usetikzlibrary{arrows, decorations.markings}
\tikzset{myptr/.style={decoration={markings,mark=at position 1 with %
       {\arrow[scale=2,>=stealth]{>}}},postaction={decorate}}}
\usepackage{hyperref}
\hypersetup{
    colorlinks,
    citecolor=blue,
    filecolor=black,
    linkcolor=blue,
    urlcolor=blue
}
\usepackage{nccmath}

\usepackage{pgf,tikz}
\usetikzlibrary{arrows}

\usepackage[utf8]{inputenc}
\usepackage{amssymb, amsmath,geometry}
\usepackage{fancyhdr}
\usepackage{soul}

\usepackage{emptypage}
\newcommand*\samethanks[1][\value{footnote}]{\footnotemark[#1]}

\usepackage[T1]{fontenc}
\DeclareFontFamily{T1}{calligra}{}
\DeclareFontShape{T1}{calligra}{m}{n}{<->s*[1.44]callig15}{}
\DeclareMathAlphabet\mathcalligra   {T1}{calligra} {m} {n}

\begin{document}

\title{Pseudo-Substitutability: A Maximal Domain for Pairwise Stability in Matching Markets with Contracts\thanks{We thank Alejandro Neme and Jordi Massó for valuable discussions during the early stages of this project, and Flip Klijn for helpful comments. We are also grateful to Juan Pablo Torres-Martínez, María Haydée Fonseca-Mairena, and the participants at the University of Chile seminar, supported by project FOVI240104, for their helpful comments and discussions.We acknowledge financial support
from UNSL through grants 031620 and 031323, from Consejo Nacional
de Investigaciones Cient\'{\i}ficas y T\'{e}cnicas (CONICET) through grant
PIP 112-200801-00655, and from Agencia Nacional de Promoción Cient\'ifica y Tecnológica through grant PICT 2017-2355.}}
%\subtitle{Do you have a subtitle?\\ If so, write it here}

%\titlerunning{Short form of title}        % if too long for running head

\author{Nadia Gui\~{n}az\'u\samethanks [2]\thanks{Instituto de Matem\'{a}tica Aplicada San Luis (UNSL-CONICET) and Departamento de Matemática, Universidad Nacional de San
Luis, San Luis, Argentina. Emails: \texttt{ncguinazu@unsl.edu.ar} (N. Gui\~{n}azu), \texttt{nmjuarez@unsl.edu.ar} (N. Juarez), \texttt{pbmanasero@unsl.edu.ar} (P. Manasero), \texttt{paneme@unsl.edu.ar} (P. Neme) and \texttt{joviedo12@gmail.com} (J. Oviedo).}    \and Noelia Juarez\samethanks[2] \and Paola Manasero\samethanks[2] \and Pablo Neme\samethanks[2] \and Jorge Oviedo\samethanks[2]}
%\authorrunning{Short form of author list} % if too long for running head

\date{\today}
\maketitle

\begin{abstract}

We study the existence of pairwise stable allocations in matching markets with contracts and propose a domain restriction that guarantees their existence. Specifically, we define \emph{pseudo-substitutable preferences}, a domain that strictly extends the classical notion of substitutability while still preserving the existence of pairwise stable allocations. This domain accommodates limited complementarities among contracts while retaining enough structure to preserve the key stability properties of substitutable preferences. Moreover, we show that, among all preference domains that contain the classical substitutable domain and guarantee the existence of pairwise stable allocations, the pseudo-substitutable domain is maximal. Our results establish that pairwise stability extends well beyond the classical substitutable domain.

\bigskip

\noindent \emph{JEL classification:} C78, D47.\bigskip

\noindent \emph{Keywords:} Many-to-many matching with contracts; pairwise stability; substitutability; pseudo-substitutable preferences.

\end{abstract}

\section{Introduction}

In this paper, we study domain restrictions that ensure the existence of pairwise stable allocations in many-to-many matching models with contracts. Our main contribution is to identify a maximal preference domain, among those containing the classical substitutable domain, that guarantees the existence of pairwise stable allocations.

The many-to-many matching with contracts framework provides a unifying model to study markets in which agents on both sides may engage in multiple bilateral relationships. In this setting, the basic unit of analysis is the contract, which specifies both the identity of the parties and the terms of their agreement. This formulation builds on the classical labor-market model of \cite{kelso1982job} and was formalized by \cite{HatfieldMilgrom2005}, who develop the matching-with-contracts approach mainly in a many-to-one environment. Their framework, nonetheless, reveals that the contract-based formulation naturally encompasses a broad class of matching problems, including college admissions, labor markets, and auction environments, and it can be extended to environments with many-to-many interactions. In this paper, we focus on a many-to-many market in which doctors and hospitals may simultaneously engage in multiple contractual relationships. By treating contracts as the objects of choice, the model captures the richness of many-to-many relationships while preserving the tractability needed to analyze the existence and structure of ``stable'' allocations.

Intuitively, an allocation is considered ``stable'' if it cannot be improved upon by agents through the formation of new contractual arrangements. The literature distinguishes two main notions of stability: corewise stability and pairwise stability. In this paper, we focus on the latter notion, namely pairwise stability, which serves as the central stability concept throughout our analysis.

Pairwise stability focuses on bilateral deviations through the formation of a single contract between a doctor and a hospital. An allocation is pairwise stable if every contract outside the allocation is rejected by at least one of the parties involved; equivalently, there is no doctor--hospital pair that can profitably deviate by signing a mutually acceptable contract outside the allocation. Thus, pairwise stability captures the idea that no bilateral agreement provides both agents with an incentive to deviate from the current allocation.

A natural question that arises from this notion is whether such allocations always exist. In general many-to-many matching environments, the answer is negative. For this reason, a central question in the literature is to identify preference domains under which pairwise stable allocations are guaranteed to exist. The classical approach relies on the notion of substitutable preferences, a condition requiring that the desirability of a contract for an agent does not depend on which other contracts are available. In this paper, we revisit this perspective and introduce a weaker domain, which we call \emph{pseudo-substitutable preferences}, that extends the class of markets in which pairwise stability is guaranteed.

The domain of pseudo-substitutable preferences generalizes the classical notion of substitutability while preserving its key structural implications. Preferences in this domain retain many of the desirable features of substitutable preferences—most importantly, the guarantee of pairwise stable allocations—while allowing for certain forms of complementarities among contracts that strict substitutability does not accommodate. 

Intuitively, a complementarity arises when the desirability of a given contract depends positively on the presence of another contract. In other words, an agent may find a contract acceptable only when it is considered together with an additional contract, even though the same contract would not be chosen in isolation. Such complementarities naturally violate substitutability, since the rejection of one contract may alter the desirability of another. The pseudo-substitutable domain is designed precisely to accommodate certain limited forms of complementarity while preserving enough structure to ensure the existence of pairwise stable allocations.

The key idea behind this domain is the notion of a \emph{sub-preference}. Intuitively, a sub-preference is a more restricted version of a preference that preserves its underlying evaluative structure. It is defined on a smaller collection of acceptable sets of contracts, but remains consistent with the way the original preference ranks and evaluates contractual alternatives. More specifically, whenever a set of contracts is acceptable under the sub-preference, it is also acceptable under the original preference. Moreover, the sub-preference preserves the marginal assessment of contracts: if, under the original preference, a contract is desirable when evaluated together with a set of contracts that is acceptable in the sub-preference, then that same contract remains desirable within the sub-preference.
 
A preference is said to be \emph{pseudo-substitutable} whenever it admits at least one sub-preference that is substitutable. In addition, these substitutable sub-preferences are minimal, in the sense that they do not admit any other distinct sub-preference that is also substitutable. This minimality provides a canonical representation of the substitutable structure embedded in the original preference.
 
Our main contribution is to show that, in many-to-many matching markets with contracts, the domain of pseudo-substitutable preferences guarantees the existence of pairwise stable allocations. The proof relies on the fact that every pseudo-substitutable preference admits a substitutable sub-preference. We show that the set of pairwise stable allocations associated with such a substitutable sub-preference is always contained in the set of pairwise stable allocations of the original preference. As a consequence, any method that produces a pairwise stable allocation under substitutable preferences can be applied to the corresponding substitutable sub-preference, and the resulting allocation remains pairwise stable under the full pseudo-substitutable preference profile.
 
Beyond establishing existence, we also show that the pseudo-substitutable domain is \emph{maximal} among all preference domains that contain the classical substitutable domain and guarantee the existence of pairwise stable allocations. More precisely, this means two things: first, whenever all agents’ preferences belong to this domain, at least one pairwise stable allocation exists; second, if the preference of even one agent lies outside the domain, it is possible to find preferences for the remaining agents within the domain such that no pairwise stable allocation exists. Thus, pseudo-substitutability is not only sufficient for existence, but also identifies the largest domain within which pairwise stable allocations can be guaranteed.
 
An additional feature of our framework is that pseudo-substitutability is defined directly at the level of preferences, rather than through the structure of contracts. This makes the domain independent of the contractual representation of the market and allows it to remain meaningful in standard matching models without contracts. In particular, when the analysis is restricted to contract-free many-to-many environments, pseudo-substitutability continues to constitute a strict superdomain of substitutable preferences.

An important implication of this feature is that pseudo-substitutability remains
a strict extension of substitutability even in standard matching models without contracts.
In contrast, most existing relaxations of substitutability collapse to the classical
substitutable domain once the contractual structure is removed.
Thus, pseudo-substitutability captures a genuinely broader class of preferences
that is not driven by contractual richness, but rather by the intrinsic structure
of agents’ preferences.
\subsection*{Pairwise Stability as the Relevant Solution Concept}

The literature distinguishes between corewise and pairwise notions of stability, and these concepts are not equivalent in general.\footnote{For a thorough analysis of the relationship between these two notions of stability in related matching frameworks, we refer the reader to \cite{echenique2004theory}.} 
Corewise stability rules out deviations by arbitrary coalitions of agents through the formation of sets of contracts, whereas pairwise stability restricts attention to bilateral deviations involving a single doctor--hospital contract. As a consequence, corewise stability is generally the stronger requirement.

Under additional structural assumptions, however, the two notions may coincide. In particular, in many-to-one matching markets with substitutable preferences, ruling out bilateral deviations is sufficient to eliminate all profitable coalitional deviations. In this case, pairwise stability and corewise stability are equivalent. This equivalence, however, is no longer guaranteed in many-to-many environments. Even when preferences satisfy substitutability, the possibility that doctors and hospitals may simultaneously engage in multiple contracts gives rise to profitable coalitional deviations that need not be captured by bilateral deviations alone. As a result, pairwise stability and corewise stability generally diverge in many-to-many matching markets.

In light of this distinction, our focus on pairwise stability is motivated primarily by the decentralized nature of the many-to-many markets we study. In many real-world settings, deviations typically arise through bilateral renegotiations rather than through coordinated multilateral agreements. For instance, in decentralized medical labor markets, hospitals commonly negotiate individually with doctors over part-time, temporary, or service-specific contracts. Although coordinated reallocations across multiple hospitals and doctors may in principle improve outcomes, such arrangements are often prevented by institutional, informational, or transactional frictions.

For this reason, pairwise stability provides a more realistic benchmark for the deviations that agents can actually undertake. From an analytical viewpoint, it is also substantially more tractable: verifying pairwise stability only requires checking the absence of blocking doctor--hospital pairs, whereas corewise stability requires evaluating a much larger family of potential blocking coalitions and contract sets. These practical and methodological considerations make pairwise stability the natural solution concept for our analysis.

\subsection*{Related literature}
The literature on stability in matching with contracts investigates which preference domains guarantee the existence of stable allocations in two-sided markets. This question has been studied both in many-to-one markets with contracts---which are inherently \emph{unitary}, since each doctor--hospital pair may sign at most one contract---and in many-to-many markets with contracts, where the distinction between \emph{unitary} and \emph{non-unitary} environments becomes essential. In a unitary many-to-many market, a doctor and a hospital may be involved in several relationships, but they can sign at most one contract with each other. By contrast, in a non-unitary environment, a doctor--hospital pair may simultaneously hold multiple contracts.

 Within this literature, a foundational milestone is \cite{hatfield2008matching}, which revisits the classical claim of \cite{HatfieldMilgrom2005} that substitutability is both necessary and sufficient for the existence of corewise stable allocations in many-to-one markets with contracts. By means of a counterexample, they show that substitutability is not necessary: a hospital may violate substitutability and still admit a corewise stable matching. Their argument relies on a decomposition technique, inspired by \cite{KlausKlijn2005}, which represents a non-substitutable hospital as two virtual entities whose preferences individually satisfy substitutability. Since corewise stability is preserved in the decomposed instance, it also holds in the original market. This result reveals that global substitutability is stronger than required and that stability may survive when preferences admit a suitable substitutable decomposition.
Building on this insight, \cite{hatfield2008matching} introduce the \emph{weak substitutes} condition, which rules out complementarities among contracts involving the same doctor while allowing arbitrary interactions across different doctors. They show that weak substitutes is a necessary condition for corewise stability in many-to-one markets, but not a sufficient one: even under weak substitutes, corewise stable allocations may fail to exist. This leads them to pose an explicit open question: whether there exists a domain that is both necessary (in a maximal sense) and sufficient for corewise stability.

The notion of pseudo-substitutability introduced in this paper addresses this gap, albeit for pairwise rather than corewise stability. While \cite{hatfield2008matching} formulate their question for corewise stability, we show that an analogous structural answer arises when stability is understood in its pairwise form. Pseudo-substitutability achieves this by decomposing preferences---rather than agents---into substitutable sub-preferences, yielding a domain that is both necessary (in a maximal sense) and sufficient for the existence of pairwise stable matchings in many-to-many unitary markets.

Several refinements of substitutability have been proposed in subsequent work. \cite{HatfieldKojima2010} introduce \emph{bilateral} and \emph{unilateral} \emph{substitutability} for many-to-one markets with contracts. Bilateral substitutability is sufficient for corewise stability, whereas unilateral substitutability isolates complementarities within each doctor’s offer set. Although unilateral substitutability alone does not guarantee corewise stability, it plays a central role in the literature on substitutable completability. Pseudo-substitutability is conceptually related to these approaches in that all of them search for minimal relaxations of substitutability that preserve corewise stability. The key difference is both conceptual and methodological. Conceptually, bilateral and unilateral substitutability are tied to corewise stability, whereas pseudo-substitutability is designed to support pairwise stability. Methodologically, bilateral and unilateral substitutability impose behavioral restrictions on choice functions, whereas pseudo-substitutability identifies substitutable sub-components directly within the preference domain.

Along these lines, \cite{Tello2016} studies many-to-one markets with contracts under two-unit demand and irrelevance of rejected contracts (IRC), showing that weak and bilateral substitutability coincide under these assumptions and establishing bilateral substitutability as a maximal domain for corewise stability under bounded complementarities. In contrast, pseudo-substitutability attains maximality for pairwise stability without relying on capacity restrictions.

A parallel strand of the literature investigates \emph{substitutable completability}. \cite{HatfieldKominersHiddenSubstitutes2019} show that in many-to-one markets with contracts, if a hospital’s preferences admit a substitutable many-to-many completion satisfying IRC, then the original market admits corewise stable allocations. \cite{Zhang2016} introduce \emph{weakly observable substitutability} and its \emph{cross-doctor} variant (WOSAD), which require substitutability only along observable offer histories, and analyze their relationship with substitutable completability. \cite{Kadam2017} establishes that unilateral substitutability is sufficient for the existence of a substitutable completion. These contributions explore how substitutable structure can be recovered from broader preference domains. Our approach differs in that it secures pairwise stability without appealing to completions, by extracting substitutable components directly from the original preferences.

The distinction between many-to-many unitary and non-unitary environments is therefore crucial. Pseudo-substitutability applies specifically to many-to-many unitary markets and is thus conceptually related to WOSAD \citep[see][]{BandoHiraiZhang2021}, which—like observable substitutability in \citet{HatfieldKominersWestkamp2021}—operates in the unitary setting. WOSAD captures substitutability failures that arise along observable accumulation paths of cumulative offer mechanisms. In particular, \citet{HatfieldKominersWestkamp2021} show that observable substitutability, together with the associated observable conditions, defines a maximal domain for guaranteeing the existence of corewise stable allocations in cumulative offer mechanisms, in the sense that if these conditions fail, corewise stable allocations may no longer be guaranteed even when other agents have unit-demand preferences. This notion of maximality, however, is mechanism-based and tailored to corewise stability. By contrast, pseudo-substitutability identifies a maximal domain of preferences—defined independently of any mechanism—that guarantees the existence of \emph{pairwise} stable allocations in many-to-many unitary markets with contracts.

A related strand of the literature studies the performance of cumulative offer and deferred acceptance mechanisms beyond standard substitutability assumptions. \cite{KominersSonmez2016} show that corewise stability can be preserved under \emph{slot-specific priorities}, even when substitutability fails, by exploiting an auxiliary agent--slot matching structure. \cite{HatfieldKominersWestkamp2021} provide a full characterization of when corewise stable mechanisms exist in many-to-one matching with contracts, identifying observable conditions under which the cumulative offer mechanism is essentially unique. In a many-to-many setting, \cite{Yenmez2018} study a centralized college admissions clearinghouse and show that deferred acceptance yields corewise stable allocations even when choice rules violate path-independence \citep[and, therefore, substitutability][]{chamb2017choice}, provided they admit path-independent modifications.

In contrast, non-unitary many-to-many markets allow multiple contracts between the same doctor and hospital. \cite{HatfieldKominers2017} show that in these environments substitutability is a maximal domain for corewise stability: if any agent’s preferences violate substitutability, there exist substitutable preferences of the remaining agents such that no stable allocation exists. \citet{BandoHirai2025} extend this perspective through substitutability across doctors and its observable variant, showing that corewise stable allocations may exist even under multi-contract complementarities when doctors’ preferences are responsive.
 While these frameworks share the goal of preserving stability under weaker assumptions, they are not comparable to pseudo-substitutability. The reason is structural: pseudo-substitutability is defined for unitary environments and relies on a decomposition of preferences into substitutable sub-preferences, whereas non-unitary frameworks derive stability from behavioral restrictions on choice correspondences and allow a fundamentally different class of complementarities.

Overall, the literature has developed a broad set of preference domains guaranteeing the existence of stable allocations, predominantly focusing on \emph{corewise} stability in many-to-one and many-to-many markets under various structural, observable, or behavioral restrictions. While substitutability remains a central sufficient condition—and, in some environments, a maximal domain—subsequent work has explored weaker requirements that preserve stability by restricting complementarities or exploiting observable choice behavior. Against this background, pseudo-substitutability offers a complementary perspective: it provides a structural relaxation tailored to many-to-many \emph{unitary} markets with contracts (and thus also to many-to-one markets), delivering a domain that is both necessary (in a maximal sense) and sufficient for the existence of \emph{pairwise} stable allocations. Rather than imposing behavioral constraints or relying on completions, pseudo-substitutability isolates substitutable sub-preferences within the original domain, allowing for non-substitutable interactions outside these components. A comprehensive synthesis of the domains and stability notions discussed in this section can be found in the recent survey by \citet{HatfieldKominers2025}, which provides a unifying treatment of matching with contracts across many-to-one, many-to-many, and networked environments.

The paper is organized as follows. Section~\ref{seccion preliminar} introduces the model and presents the basic notation and preliminary concepts. Section~\ref{seccion de pseudo completa} develops the notion of pseudo-substitutability, states our main results, and examines several structural properties of pseudo-substitutable preferences. Section~\ref{seccion pocicionando a presudo en la lieteratura} discusses the position of the pseudo-substitutable domain relative to other preference domains in the matching literature. Section~\ref{final remark} contains some concluding remarks. Finally, Appendices~\ref{Apendice puebas feas} and  \ref{Apendice puebas de proposiciones} collect a number of technical proofs.

\section{Model and preliminaries}\label{seccion preliminar}

A many-to-many matching model with contracts is specified by a finite set of \emph{doctors} $D$, a finite set of \emph{hospitals} $H$, and a finite set of \emph{contracts} $X$. 
The null contract, denoted by $\emptyset$, represents the situation in which an agent $a \in D \cup H$ signs no contract.

For each contract $x \in X$, let $x_D$ denote the doctor who signs contract $x$, that is, $x_D \in D$, and let $x_H$ denote the hospital that signs contract $x$, that is, $x_H \in H$. 
Given a set of contracts $X' \subseteq X$ and an agent $a \in D \cup H$, let
\[
X'_a=\{x\in X' : a\in \{x_D,x_H\}\}
\]
be the set of contracts in $X'$ involving agent $a$. 
Furthermore, let
\[
X'_D=\{x_D : x\in X'\}
\quad\text{and}\quad
X'_H=\{x_H : x\in X'\}
\]
denote the sets of doctors and hospitals involved in the contracts of $X'$, respectively.

For each hospital $h\in H$, define the set of \emph{feasible} subsets of contracts by
\[
\mathcal{X}_h=
\left\{
X'\subseteq X_h :
|X'_d|\leq 1 \text{ for every } d\in D
\right\}.
\]
Similarly, for each doctor $d\in D$, define
\[
\mathcal{X}_d=
\left\{
X'\subseteq X_d :
|X'_h|\leq 1 \text{ for every } h\in H
\right\}.
\]
These feasibility conditions mean that, for any given agent, a feasible set of contracts cannot include multiple contracts with the same counterpart.
Furthermore, each agent $a\in D\cup H$ may sign multiple contracts and has a strict preference relation $P_a$ defined over the elements of $\mathcal{X}_a$. 
That is, the preference relation $P_a$ is defined only over feasible sets of contracts.
This restriction is known as the \emph{unitarity} condition of the market; see, for instance, \cite{kominers2012correspondence}.\footnote{The unitarity condition can be imposed in different ways. It may be required directly at the level of allocations or, alternatively, imposed at the level of agents' preferences, as we do here. 
For a comprehensive discussion of unitary and non-unitary markets, see \cite{HatfieldKominers2025}.}

The collection of preference relations for all agents \( a \) is referred to as a \emph{preference profile}, denoted by \( P = (P_a)_{a \in D \cup H} \). Given a preference relation \( P_a \) and a feasible set of contracts \( X' \), we define the \emph{choice function} \( C_a^{P} \), which selects the most preferred feasible subset of \( X' \) according to \( P_a \). Throughout the paper, unless otherwise stated, all sets of contracts under consideration are assumed to be feasible.

A \emph{market} is defined by the tuple \( (D,H,X, P) \). Since the sets of doctors, hospitals, and contracts are fixed throughout the paper, we simply refer to a market by its corresponding preference profile \( P \).

One of the central domain restrictions in the matching with contracts literature is the notion of \emph{substitutability}. Introduced by \citet{kelso1982job} and extended to matching with contracts by \citet{HatfieldMilgrom2005}, substitutability imposes restrictions on how an agent’s choice function responds to changes in the set of available contracts. Formally, it requires that the desirability of a contract for an agent does not depend on the presence of other contracts.

We now recall the standard definition.

\begin{definition}\label{defino contratos substitutos}
Given an agent \( a \in D \cup H \), its preference relation \( P_a \), and a set of contracts \( X' \subseteq X\), we say that the contracts in \( X' \) are \textbf{substitutes for agent $\boldsymbol {a }$} if, for any distinct contracts \( x, x' \in X' \),
\[
x' \in C_a^{P}(X') \text{ implies } x' \in C_a^{P}(X' \setminus \{x\}).
\]
Furthermore, we say that an agent’s preference \( P_a \) is \textbf{substitutable} if all contracts are substitutes for agent \( a \).
\end{definition}

In other words, under substitutable preferences, removing an available contract cannot make another previously rejected contract acceptable. We denote by \( \mathcalligra{D} ~\) the domain of substitutable preferences.

Substitutability rules out \emph{complementarities} between contracts, whereby the attractiveness of a contract depends on the simultaneous availability of other contracts. For instance, a hospital may be willing to sign two contracts jointly because of complementarities in skills, while rejecting each contract when offered in isolation. Formally, such behavior violates substitutability, as illustrated by a preference of the form \( P_a : x^1 x^2,  \emptyset \),\footnote{To ease notation, throughout the paper, whenever we list an agent’s preferences over sets of contracts, we omit curly brackets and commas. That is, a set of contracts such as $\{x^1, x^2\}$ will be written simply as $x^1x^2$ when appearing in preference relations.} under which agent \( a \) accepts the pair of contracts but rejects each contract individually.

The following remark presents an equivalent definition of substitute contracts.
\begin{remark}\label{definicion alternatica de contratos subsitutos con inclusiones}
   Given an agent \(a \in D \cup H\), contracts are said to be \textbf{substitutes for \(\boldsymbol{a}\)} if, for every pair of sets of contracts \(X',X'' \subseteq X\) with \(X' \subseteq X''\), it holds that
\[
C^P_a(X'') \cap X' \subseteq C^P_a(X').
\]

\end{remark}
 %\begin{remark}\label{definicion alternatica de contratos subsitutos}
  %  Given a hospital $h \in H$, its preference relation $P_h$  is substitutable if there is no contracts $x,z \in X$  and a set of contracts $Y\subseteq X$ such that $z \notin C^P_h(Y\cup \{z\})$ and $z \in C^P_h(Y \cup\{x,z\}).$
%\end{remark}
Given that the choice function \(C^P_a\) is induced by an agent's preference relation $P_a$, it satisfies consistency.\footnote{Consistency states that for \(X',X''\subseteq X\), if \(C^P_a(X')\subseteq X''\subseteq X'\) then \(C^P_a(X')=C^P_a(X'')\).} In addition, if the preference is substitutable, then \(C^P_a\) satisfies the \emph{path-independence property}:
\begin{equation}\label{path independence}
    C^P_{a}\big( X' \cup X''\big) = C^P_{a}\big( C^P_{a}(X')\cup X''\big)
\end{equation}
for each \(X',X''\subseteq X\).
This property states that the choice over a set of contracts remains unchanged when the set is arbitrarily partitioned, the choice is applied to one segment, and then the choice is reapplied to the selected contracts along with all contracts from the other segment \citep[see][among others]{aizerman1981general,alkan2002class,martinez2008invariance,chamb2017choice}. 

An immediate implication of consistency, which will be useful throughout the analysis, is the following. Since, by definition of the choice function, \(C^P_a(X')\subseteq X'\), whenever \(X'_a\) satisfies
\[
C^P_a(X')\subseteq X'_a \subseteq X',
\]
consistency applied to the sets \(X'\) and \(X'_a\) yields
\[
C^P_a(X'_a)=C^P_a(X').
\]
That is, restricting attention to any intermediate set that still contains all contracts selected from \(X'\) does not alter the choice outcome.

An \emph{allocation} is any set of contracts $Y\subseteq X$ such that $Y_a\in \mathcal{X}_a$ for each $a\in D\cup H$. 
Let $\mathcal{A}$ denote the set of all allocations.

Given an allocation $Y\in\mathcal{A}$, we say that it is \emph{individually rational} if
\[
C_a^{P}(Y)=Y_a
\quad\text{for every } a\in D\cup H.
\]
Note that, when $P_a$ is a substitutable preference relation, $C^P_{a}(Y) =Y_{a}$ implies, by substitutability, that for each contract $x\in Y_a$ we have  $x\in C^P_{a}( \{x\}) .$ 
Given $Y\in \mathcal{A}$, $x\in X \setminus Y$ is a \emph{blocking contract} for $Y$ if $x\in C^P_{a}(Y\cup \{x\})$ for each $a\in \{x_D,x_H\}$. 
   Furthermore, allocation $Y$ is  \emph{(pairwise) stable} if it is individually rational and there is no blocking contract for $Y$. Given the market $P$, denote by $S(P)$ the set of all \emph{pairwise stable allocations for the market } $P$.\footnote{Several papers in the literature consider an alternative notion of stability known as \emph{corewise stability}. An allocation \( Y \subseteq X \) is \emph{corewise stable} if it satisfies two conditions: (i) \(  C_a(Y) = Y_a \) for each $a\in D\cup H$, and (ii) there is no set of contracts $X'\subseteq X$ with \( X' \cap Y=\emptyset \) and for each $a\in X'_{D\cup H}$,  $    X'_a \subseteq C_a(Y \cup X')$ \citep[see][for more details]{HatfieldMilgrom2005}.} Following \cite{klaus2009stable}, the set of pairwise stable allocations in many-to-many matching markets with substitutable preferences is nonempty.

%The classical matching model without contracts can be viewed as a special case of the matching-with-contracts framework by assuming that the set of contracts $X$ contains at most one contract for each pair $(d,h) \in D \times H$. In this paper, we study both models: with and without contracts.

\section{Pseudo-substitutability}\label{seccion de pseudo completa}
In this section, we introduce a new domain of preferences, which we call \emph{pseudo-substitutable}. This domain provides sufficient—and, in a maximal-domain sense, necessary—conditions for the guaranteed non-emptiness of the set of pairwise stable allocations. It strictly contains the domain of substitutable preferences, thereby extending the scope of existing stability results. The definition of pseudo-substitutability is based on a structural condition formulated through the notion of a \emph{sub-preference}, which we introduce next.

\begin{definition}\label{definicion de SubPref}
Given an agent $a\in D\cup H$ and given two preference relations $P_a$ and $P'_a$, we say that $P'_{a}$ is a \textbf{sub-preference}
of $P_{a},$ denoted by $\boldsymbol{P'_{a}\sqsubseteq P_{a}}$, if for each subset of contracts $X',$ we have:
\begin{enumerate}[(i)]
\item $X'_a=C_a^{P'}(X') $ implies that $X'_a=C_a^P( X'),$  and
\item for each $X_a'=C_a^{P'}(X')$, and each $x\in X\setminus X'$ with , $x\in C_a^{P}( X'\cup \{x\}) $ implies that $x\in C_a^{P'}( X'\cup\{x\}) .$%
\end{enumerate}%[chapter]
\end{definition}
Given a preference profile $P$, we say that $P'$ is a \emph{sub-preference profile} of $P,$ $P'\sqsubseteq P,$ if $P'_{a}\sqsubseteq P_{a}$ for each $a\in D\cup H$.
Given a preference relation $P_a$, we denote by $A(P_a)$ the collection of all sets of contracts that satisfy $X'_a = C_a^P(X')$, which we refer to as the \emph{collection of acceptable sets} of the preference relation $P_a$. Conditions (i) and (ii) are directly related to the concept of pairwise stability. The notion of a sub-preference ensures that the set of stable contracts under the sub-preference is contained within the set of stable contracts of the original preference (see Proposition \ref{P'_fsubsetP_f} below). In this framework, Condition (i), when the set of contracts \( X' \) is interpreted as an allocation, is associated with the notion of individual rationality, while Condition (ii) is related to the notion of a blocking contract. Note that, by definition, any preference relation is trivially a sub-preference of itself.

The following example illustrates a preference relation that is a sub-preference of another, as well as one that is not.

\begin{example}\label{Ejemplo 1}
Consider a hospital $h$ and a set of contracts $\{x ,y,z\}$, and let us consider the following preference relations of hospital $h$:
\[
\begin{array}{rl}
P_{h}: & x yz,\, z,\, x y,\, x ,\, y,\, \emptyset,\\
P'_{h}: & x yz,\, x y,\, z,\, x ,\, y,\, \emptyset,\\
P''_{h}: & z,\, x y,\, x ,\, y,\, \emptyset.
\end{array}
\]
It is straightforward to observe that 
\[
 P'_{h} \sqsubseteq P_{h}   \text{ and }  
P''_{h} \sqsubseteq P_{h}.
\] 
However, the preference relation 
\[
\widetilde{P}_{h}: x y,\, z,\, x ,\, y,\, \emptyset
\] 
does not satisfy $\widetilde{P}_{h} \sqsubseteq P_{h}$. To see this, consider 
$C_h^{\widetilde{P}}(\{x ,y\})=\{x ,y\}$. Then,
\[
z \in C_h^{P}(\{x ,y\}\cup\{z\}) = \{x ,y,z\}, 
  \text{ but }  
z \notin C_h^{\widetilde{P}}(\{x ,y\}\cup\{z\}) = \{x ,y\},
\]
which implies that Condition (ii) in Definition~\ref{definicion de SubPref} is violated.\hfill$\Diamond$
\end{example}

Example~\ref{Ejemplo 1} suggests that the sub-preference relation exhibits a hierarchical structure: when one preference is a sub-preference of another, and the latter is itself a sub-preference of a third, the first remains a sub-preference of the third; that is,
\[
P''_{h} \sqsubseteq P'_{h} \sqsubseteq P_{h}.
\]
The lemma below shows that this feature is not merely a property of the example, but holds in general. Establishing this transitivity is key to the structural role played by sub-preferences in our analysis and will be used repeatedly in the results that follow.

\begin{lemma}\label{lema de trasitividad}
Let $a\in D\cup H,$ and let $P''_{a},P'_{a},P_{a}$ be preference relations such that $P''_{a}\sqsubseteq P'_{a},$ and $P'_{a}\sqsubseteq P_{a}.$ Then, $P''_{a}\sqsubseteq P_{a}.$%
\end{lemma}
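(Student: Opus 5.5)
We verify the two conditions of Definition~\ref{definicion de SubPref} for the pair $(P''_a,P_a)$ directly, chaining the corresponding conditions for $P''_a\sqsubseteq P'_a$ and $P'_a\sqsubseteq P_a$. No result beyond the definition itself should be needed; the argument is purely a composition of implications, with the only care being that the hypotheses of each step are available at the point where they are invoked.

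\emph{Condition (i).} Fix a set of contracts $X'$ with $X'_a=C_a^{P''}(X')$. Condition (i) for $P''_a\sqsubseteq P'_a$ gives $X'_a=C_a^{P'}(X')$. Applying condition (i) for $P'_a\sqsubseteq P_a$ to the same $X'$ then gives $X'_a=C_a^{P}(X')$. In the paper's notation, $A(P''_a)\subseteq A(P'_a)\subseteq A(P_a)$.

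\emph{Condition (ii).} Fix $X'$ with $X'_a=C_a^{P''}(X')$, and fix $x\in X\setminus X'$ with $x\in C_a^{P}(X'\cup\{x\})$. We need $x\in C_a^{P''}(X'\cup\{x\})$. By the step for condition (i), we already know $X'_a=C_a^{P'}(X')$, so $X'$ is an admissible set in condition (ii) for $P'_a\sqsubseteq P_a$. That condition turns $x\in C_a^{P}(X'\cup\{x\})$ into $x\in C_a^{P'}(X'\cup\{x\})$. Since $X'_a=C_a^{P''}(X')$ by hypothesis, condition (ii) for $P''_a\sqsubseteq P'_a$ then gives $x\in C_a^{P''}(X'\cup\{x\})$, as required.

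The only point needing attention is this second step. To apply (ii) for $P'_a\sqsubseteq P_a$, the set $X'$ must be acceptable under the \emph{middle} preference $P'_a$, not merely under $P''_a$. That is exactly why condition (i) is established first and then reused. One should also note that all sets involved, in particular $X'\cup\{x\}$, are treated as feasible, following the paper's standing convention; this convention is identical across the three preferences, so it causes no asymmetry.
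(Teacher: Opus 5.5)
Your proof is correct and follows the same route as the paper. Condition (i) comes from chaining the two implications on acceptable sets, and condition (ii) from passing $x\in C_a^{P}(X'\cup\{x\})$ first through $P'_a\sqsubseteq P_a$ and then through $P''_a\sqsubseteq P'_a$; your explicit remark that $X'$ must first be shown acceptable under $P'_a$ (via condition (i)) before applying condition (ii) for $P'_a\sqsubseteq P_a$ fills in a step the paper uses only implicitly.
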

\begin{proof}Let $a\in D\cup H,$ and $X'\subseteq X$. 
To prove Condition (i), observe that since $P''_{a} \sqsubseteq P'_{a}$, we have that
$X'_a = C_a^{P''}(X')$ implies $X'_a = C_a^{P'}(X').$
In turn, since $P'_{a} \sqsubseteq P_{a}$, this further implies that
$X'_a = C_a^{P}(X').$
 Thus, $X'_a = C_a^{P''}(X')$ implies $X'_a = C_a^{P}(X')$.
To prove Condition (ii), let $x \in X \setminus X'$ with $X'_a=C_a^{P''}(X')$ be such that $x \in C_a^{P}(X' \cup \{x\})$. Now, since $P'_{a} \sqsubseteq P_{a}$, we have that $x \in C_a^{P}(X' \cup \{x\})$ implies  $x \in C_a^{P'}(X' \cup \{x\})$. Moreover, since $P''_{a} \sqsubseteq P'_{a}$, we have that $x \in C_a^{P'}(X' \cup \{x\})$ implies  $x \in C_a^{P''}(X' \cup \{x\})$. Hence, $x \in C_a^{P}(X' \cup \{x\})$ implies $x \in C_a^{P''}(X' \cup \{x\})$ and, therefore, $P''_{a} \sqsubseteq P_{a}$.
\end{proof}

The second key property of the sub-preference relation is that the set of stable allocations under a sub-preference is contained in the set of stable allocations under the original preference. It is important to note that this property establishes only a set inclusion and does not, by itself, guarantee the non-emptiness of the set of pairwise stable allocations under the original preferences. Nevertheless, this inclusion property is central to our main result. The following proposition formalizes this statement.

\begin{proposition}\label{P'_fsubsetP_f}
Let $P$ and $P'$ be two preference profiles. If $P' \sqsubseteq P$, then $S( P') \subseteq S( P) .$%
\end{proposition}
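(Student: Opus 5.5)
The argument is direct from the definitions: take $Y \in S(P')$ and verify the two requirements of pairwise stability under $P$, individual rationality and absence of blocking contracts. Condition (i) of Definition~\ref{definicion de SubPref} will give the first and Condition (ii) the second. Throughout, the allocation $Y$ itself plays the role of the set $X'$ in the definition of sub-preference.

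\emph{Individual rationality.} Since $Y \in S(P')$, for every agent $a \in D\cup H$ we have $C_a^{P'}(Y) = Y_a$. Because $P'_a \sqsubseteq P_a$, Condition (i) with $X' = Y$ yields $C_a^{P}(Y) = Y_a$. Hence $Y$ is individually rational under $P$. As a by-product, $Y$ is an allocation, so $Y_a \in \mathcal{X}_a$ for each $a$, and all choice sets involved are well defined.

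\emph{No blocking contracts.} I argue by contradiction. Suppose some $x \in X\setminus Y$ blocks $Y$ under $P$, so that $x \in C_a^{P}(Y\cup\{x\})$ for each $a \in \{x_D, x_H\}$. For each such $a$, the hypothesis $Y_a = C_a^{P'}(Y)$ of Condition (ii) holds by stability under $P'$. Condition (ii) then gives $x \in C_a^{P'}(Y\cup\{x\})$ for both $a = x_D$ and $a = x_H$. So $x$ would block $Y$ under $P'$, contradicting $Y \in S(P')$. Therefore $Y \in S(P)$.

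The only delicate point is a bookkeeping one rather than a real obstacle. Condition (ii) is stated with the premise $X'_a = C_a^{P'}(X')$ for the specific agent $a$, and I must check that this premise holds for each endpoint of the candidate blocking contract separately. It does, because individual rationality under $P'$ holds agentwise. One could also worry whether $Y\cup\{x\}$ is feasible for $a$, since $x$ may duplicate a partner. However, the blocking definition in the paper is stated verbatim in terms of $C_a^P(Y\cup\{x\})$, and Condition (ii) is stated in exactly the same form. The implication therefore transfers without any further feasibility argument.
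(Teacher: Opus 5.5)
Your proposal is correct and follows the paper's proof: you transfer individual rationality via Condition~(i) with $X'=Y$, then apply Condition~(ii) at both endpoints $x_D$ and $x_H$ of a putative $P$-blocking contract to obtain a $P'$-blocking contract, a contradiction. You also spell out that the premise $Y_a=C_a^{P'}(Y)$ of Condition~(ii) holds agentwise by individual rationality under $P'$, which the paper uses only implicitly.
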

\begin{proof}
Let $P$ and $P'$ be preference profiles such that $P' \sqsubseteq P$. Consider $Y$ a stable allocation under the profile $P'$. We will prove that $Y$ is stable under profile $P.$ 
First, Condition (i) of Definition~\ref{definicion de SubPref} ensures that if an allocation is individually rational under $P'$, then it is also individually rational under $P$. 
Second, assume for the sake of contradiction that $Y$ is not stable under $P$. Then there is a blocking contract $x \in X \setminus Y$ such that $x\in C^P_{a}(Y\cup \{x\})$ for each $a\in \{x_D,x_H\}$. 
Condition (ii) of Definition~\ref{definicion de SubPref} states that if $x \in C_{x_H}^{P}(Y \cup \{x\})$, then $x \in C_{x_H}^{P'}(Y \cup \{x\})$, and $x \in C_{x_D}^{P}(Y \cup \{x\})$, then $x \in C_{x_D}^{P'}(Y \cup \{x\})$.  Thus,  $x$ is a blocking contract for $Y$ under $P'$ as well, contradicting the assumption that $Y$ is stable under $P'$. 
Therefore, $S(P') \subseteq S(P)$.
\end{proof}

The following example illustrates a market in which the set of stable allocations under a sub-preference profile is contained in the set of stable allocations under the original preference profile. In this particular case, the inclusion is strict. However, this need not hold in general. Indeed, in the extreme case in which a preference profile is trivially a sub-preference profile of itself, the two sets of stable allocations coincide.

\begin{example}\label{ejemplo 2}
Consider $D=\{d_1,d_2,d_3\}$ and  $H=\{h\}$ where $h$ is the hospital from Example~\ref{Ejemplo 1}. Let the set of contracts be $X=\{x ,y,z\}$, and recall $P''_h$ and $P_h$ the preference relations of hospital $h$ from Example~\ref{Ejemplo 1}. Furthermore, consider the doctors' preference relations:
\[
P_{d_1}: x ,\emptyset, ~~~  P_{d_2}: y,\emptyset, \text{~~ and ~~}   P_{d_3}: z,\emptyset.
\]
Now, consider the markets $P=(P_D,P_H)$ and $P''=(P_D,P''_H)$. Note that $S(P'')$ consists solely of the allocation $Y=\{z\}$. In contrast, $S(P)$ consists of the allocations $Y=\{z\}$ and $Y'=\{x ,y,z\}$. Therefore, $S(P'') \subsetneq S(P).$ \hfill$\Diamond$
\end{example}

We are now in a position to formally introduce our condition of pseudo-substitutability. This condition allows us to define a domain that extends the classical domain of substitutable preferences, while still ensuring pairwise stability. Formally,

\begin{definition}
\label{Def P-S}Given an agent $a$'s preference relation $P_{a}$, we say that $P_{a}$ is \textbf{pseudo-substitutable}
if there is a substitutable preference relation $P'_{a}$ such that $P'_{a}\sqsubseteq P_a$.
\end{definition}

Example~\ref{Ejemplo 1} provides a direct illustration of this definition. Indeed, the preference relation \(P_h\) is pseudo-substitutable because it admits the sub-preference \(P''_h\), which satisfies \(P''_h \sqsubseteq P_h\) and is substitutable. Hence, although \(P_h\) itself may display complementarities, it still contains a substitutable sub-preference. % and therefore belongs to the pseudo-substitutable domain.

We say that a preference profile $P$ is \emph{pseudo-substitutable}, if $P_{a}$ is a pseudo-substitutable preference relation for each agent $a\in D\cup H.$ Moreover, it follows directly from the definition of pseudo-substitutable preferences and from the definition of sub-preference that a profile of substitutable preferences is also pseudo-substitutable. Formally,

\begin{remark}
  The domain of pseudo-substitutable preferences contains the domain of substitutable preferences.
\end{remark}

The intuition behind the notions of sub-preference relations and pseudo-substitutability is that they provide a way to extract, from a preference relation with complementarities, the portion of the relation that satisfies substitutability. By identifying—and, to some extent, controlling—the sets of contracts that exhibit substitutable behavior within a preference relation that may otherwise display complementarities, we are able to enlarge the domain over which stability can be guaranteed.

Building on this insight, the following result—one of the main contributions of this paper—shows that the newly introduced domain of pseudo-substitutable preference relations, which strictly contains the domain of substitutable preference relations, always admits a non-empty set of stable allocations. Formally,

\begin{theorem}
If $P$ is a pseudo-substitutable preference profile, then $S( P)\neq \emptyset .$%
\end{theorem}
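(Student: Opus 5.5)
The argument reduces the theorem to the substitutable case by combining Definition~\ref{Def P-S} with Proposition~\ref{P'_fsubsetP_f}.

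\emph{Step 1 (a substitutable sub-profile).} Since $P$ is pseudo-substitutable, each agent's preference $P_a$, for $a\in D\cup H$, admits a substitutable preference relation $P'_a$ with $P'_a\sqsubseteq P_a$. Choose one such $P'_a$ for every agent; this is possible because the sets of agents and contracts are finite. Let $P'=(P'_a)_{a\in D\cup H}$. By construction, $P'\sqsubseteq P$ in the profile sense defined after Definition~\ref{definicion de SubPref}. Each $P'_a$ is a strict preference over the same feasible family $\mathcal{X}_a$, so $P'$ is again a unitary many-to-many market with contracts, and it is substitutable.

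\emph{Step 2 (existence for $P'$).} We invoke the existence result for substitutable many-to-many markets cited at the end of Section~\ref{seccion preliminar}, following \cite{klaus2009stable}, to obtain $S(P')\neq\emptyset$. If a self-contained argument were wanted, one would run a Tarski-type fixed point or deferred acceptance argument. That argument rests on the path independence property~(\ref{path independence}) that substitutable choice functions satisfy.

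The only point needing care is that the cited result applies to the present unitary setting with contracts, where preferences are defined only over feasible sets. It uses only substitutability, which gives path independence of each $C^{P'}_a$, together with the domain of choice being the feasible sets. This is the main obstacle, and I expect it to be a matter of checking the hypotheses rather than genuine difficulty.

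\emph{Step 3 (transfer back to $P$).} Take any $Y\in S(P')$. By Proposition~\ref{P'_fsubsetP_f}, $S(P')\subseteq S(P)$, so $Y\in S(P)$ and therefore $S(P)\neq\emptyset$.
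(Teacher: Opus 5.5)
Your proposal is correct and matches the paper's proof: choose a substitutable sub-preference for each agent to get a substitutable profile $P'\sqsubseteq P$, obtain $S(P')\neq\emptyset$ from the existence result for substitutable many-to-many markets \citep{klaus2009stable}, and conclude via the inclusion $S(P')\subseteq S(P)$ of Proposition~\ref{P'_fsubsetP_f}. The caveat you raise, that the cited existence result must apply to the unitary setting with contracts, is one the paper also settles only by citation, so it is not a gap relative to the paper's argument.
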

\begin{proof}If $P$ is a pseudo-substitutable preference profile, then there is a substitutable preference profile $P'$ such that $P'\sqsubseteq P.$ Thus, by Proposition \ref{P'_fsubsetP_f}, we have that $S(P') \subseteq S( P) .$ Since $P'$ is a substitutable preference profile, $S(P') \neq \emptyset .$ Therefore, $ S( P) \neq \emptyset.$ 
\end{proof}

The following theorem complements the previous results by showing that pseudo-substitutability is necessary for the existence of pairwise stable allocations. The intuition behind this result is that, given any preference of a hospital that is not pseudo-substitutable, it is possible to construct pseudo-substitutable preference relations for the remaining agents in the market in such a way that the set of stable allocations is empty. Formally,

\begin{theorem}\label{teorema maximal domain}
If $P_{h}$ is not a pseudo-substitutable relation, then there is a market $P$ with at least one other hospital and two doctors, all of whom have pseudo-substitutable preference relations $P_{-h}$, and a set of contracts $X$ containing at least one contract between each doctor-hospital pair such that $S(P)=\emptyset.$\footnote{Where $P_{-h}$ denotes the preference profile of $D \cup H \setminus \{h\}$.}
\end{theorem}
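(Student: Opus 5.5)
The plan is to argue by construction, in the spirit of the maximal-domain results of \cite{HatfieldKominers2017}. First I extract from the failure of pseudo-substitutability a concrete complementarity pattern in $P_h$ involving at most two doctors. Then I embed $h$ in a small market with one extra hospital $h'$ and two doctors $d_1,d_2$. All agents other than $h$ get \emph{substitutable} preferences, in fact unit-demand or responsive ones, which are pseudo-substitutable by the Remark following Definition~\ref{Def P-S}. These preferences are chosen so that every candidate allocation is either not individually rational or is blocked.

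\emph{Step 1 (structural lemma).} I show that if $P_h$ is not pseudo-substitutable, then it admits a ``minimal'' complementarity of the following type. There exist contracts $x,y\in X_h$ with distinct doctors $x_D=d_1\neq d_2=y_D$ and a feasible set $Z$ (possibly empty) such that $y\notin C^P_h(Z\cup\{y\})$ but $y\in C^P_h(Z\cup\{x,y\})$. Moreover, the pattern can be chosen so that the lower contour around $Z$ cannot be ``pruned away''. The method is to test explicit candidate substitutable sub-preferences and read off the obstruction from the failure of Definition~\ref{definicion de SubPref}.
\begin{itemize}
\item[(a)] The trivial preference accepting only $\emptyset$ is substitutable. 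Condition (i) holds for it automatically, so Condition (ii) must fail at $X'=\emptyset$. Hence some contract $x$ satisfies $x\in C^P_h(\{x\})$.
\item[(b)] Next I take the substitutable preference generated by a maximal chain of acceptable sets that $P_h$ treats substitutably, for instance the responsive-type relation on the individually acceptable contracts that agrees with $P_h$ wherever $P_h$ has no complementarity.
\item[(c)] Its failure to be a sub-preference pins down a set $X'$, acceptable for both relations, and a contract $w$ with $w\in C^P_h(X'\cup\{w\})$ but $w\notin C^{P'}_h(X'\cup\{w\})$. This is exactly a complementarity of $w$ with part of $X'$. Using Lemma~\ref{lema de trasitividad} and finiteness, I pass to a minimal such pattern, which reduces it to the two-doctor configuration $(x,y,Z)$ above.
\end{itemize}

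\emph{Step 2 (the market).} Let $h'$ be a new hospital with contracts $x'$ with $d_1$ and $y'$ with $d_2$. Contracts in $Z$ belong to additional doctors, or are merged into $d_1,d_2$ where the statement's ``two doctors'' requires it. Those doctors rank their $Z$-contract first, so $Z$ is always held by $h$. Then I set up a matching-pennies cycle:
\begin{itemize}
\item $d_1$ prefers $x$ to $x'$, and $d_2$ prefers $y'$ to $y$, each with unit demand;
\item $h'$ prefers $\{y'\}$ to $\{x'\}$ to $\emptyset$, also unit demand.
\end{itemize}
The aim is that whenever $h$ holds $Z\cup\{x,y\}$, the doctor $d_2$ and $h'$ block with $y'$. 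Whenever $h$ holds $Z$ or $Z\cup\{x\}$ without $y$, the contract $x'$ (or $y$ together with $x$) blocks. I fix the precise orientation after Step 1, since which of $Z\cup\{x\}$ and $Z$ is acceptable under $P_h$ matters. If $P_h$ admits no complementarity at all, it is substitutable and there is nothing to prove. If the two-doctor pattern is unavailable, fallback case (a) still yields an individually acceptable contract to anchor a cycle.

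\emph{Step 3 (verification).} I enumerate the finitely many allocations; each agent other than $h$ has at most two contracts. For each allocation I exhibit either a violation of individual rationality or a blocking contract, using only the displayed choices of $P_h$ from Step 1 together with consistency of $C^P_h$.

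The main obstacle is Step 1. Failure of pseudo-substitutability is a universal statement (\emph{no} substitutable $P'_h\sqsubseteq P_h$), and it has to be converted into a single local complementarity strong enough to drive the cycle. Plain non-substitutability is not enough, because Example~\ref{Ejemplo 1} shows complementarities that are harmless. The lemma must therefore certify that the complementarity sits on sets that every substitutable candidate is forced by Condition (ii) to accept. I would first try to build a canonical ``largest substitutable candidate'' $P^*_h$ greedily from $\emptyset$ upward, adding a set only if it keeps substitutability and respects Condition (ii). Non-pseudo-substitutability then means the greedy procedure gets stuck at a specific set $X'$ and contract $w$, and that stuck configuration supplies the contracts for Step 2.
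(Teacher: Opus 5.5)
\textbf{Gaps in the proposal as written.} Your architecture matches the paper's: extract a local complementarity of $h$, then embed $h$ in a matching-pennies market with one extra hospital. However, the step that carries the argument is missing, and Step~2 fails as written.

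The pattern in Step~1, $y\notin C^P_h(Z\cup\{y\})$ and $y\in C^P_h(Z\cup\{x,y\})$, is just a violation of substitutability. It already occurs in the pseudo-substitutable relation $xz,\emptyset$ (take $Z=\emptyset$). You acknowledge this, but the refinement that the lower contour ``cannot be pruned away'' is never defined, and your items (b)--(c) cannot produce it. A particular substitutable candidate may fail to be a sub-preference through Condition~(i), or because of its own ranking, and that failure says nothing about the other candidates. A greedy run that gets stuck has the same defect.

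Step~2 has further problems.
\begin{itemize}
\item It needs $Z\cup\{x,y\}$ to be feasible, and it needs $h$ to accept each missing element of $Z$ in every configuration. Letting the $Z$-doctors rank their contracts first gives neither.
\item The orientation you wrote is backwards. With $d_1: x\succ x'$, $d_2: y'\succ y$ and $h': y'\succ x'$, consider the allocation where $h$ holds $Z\cup\{x\}$ and $h'$ holds $y'$. It gives $d_1$, $d_2$ and $h'$ their top choices, so it is stable whenever $Z\cup\{x\}$ is acceptable to $h$.
\item The cycle needs the doctor of the independent contract to prefer the outside option, and the doctor of the dependent contract to prefer $h$. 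This is the paper's orientation: $d^1: y^1,x^1$; $d^2: x^2,y^2$; $h': y^2,y^1$.
\end{itemize}

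\textbf{The statement itself fails.} No proof of Step~1 can rescue the plan, because the theorem is false for some non-pseudo-substitutable relations.

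Take the paper's own example $P_h: xz,x,z',z,\emptyset$ with $x_D\neq z_D=z'_D$ and $X_h=\{x,z,z'\}$. Its only violation of substitutability is $z\in C_h(\{x,z,z'\})$ together with $z\notin C_h(\{z,z'\})$. Conditions~(i)--(ii) force every sub-preference to have exactly the acceptable sets of $P_h$ and to rank $\{x\}$ above $\{z'\}$ above $\{z\}$. Hence every sub-preference inherits this violation, and $P_h$ is not pseudo-substitutable. In your notation $Z=\{z'\}$, which shares the doctor of $y=z$, so $Z\cup\{x,y\}$ is not even feasible.

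In fact no admissible market works for this $P_h$:
\begin{enumerate}
\item Let $\hat C_h$ be induced by $xz, x, zz', z', z, \emptyset$. It is substitutable and differs from $C_h$ only at the input $\{z,z'\}$, where it returns the non-unitary set $\{z,z'\}$.
\item Given pseudo-substitutable $P_{-h}$, pick a substitutable $P'_a\sqsubseteq P_a$ for each $a\neq h$. By Proposition~\ref{P'_fsubsetP_f}, $S(P_h,P'_{-h})\subseteq S(P)$.
\item The market $(\hat C_h,P'_{-h})$ has a stable outcome $Y$, by the existence theorem for substitutable choice functions satisfying IRC (Fleiner's fixed-point argument; Hatfield and Kominers, 2017).
\item Doctors choose unitary sets, so $Y$ is an allocation. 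Since $\hat C_h(Y)=Y_h$ is unitary, $C_h(Y)=Y_h$.
\item Any contract blocking $Y$ under $P_h$ also blocks it under $\hat C_h$. The only discrepancy arises for $Y_h=\{z\}$, $c=z'$, where $\hat C_h$ also picks $z'$, or for $Y_h=\{z'\}$, $c=z$, where $C_h$ rejects $z$ anyway.
\end{enumerate}
Hence $S(P)\neq\emptyset$ in every market the theorem allows. The same argument works for $xy,y,z,x,\emptyset$ with $x_D=z_D\neq y_D$, which the paper shows is substitutably completable but not pseudo-substitutable.

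The paper's own route breaks on this example too. Its appendix lemma asserts a one-way complementarity inside an acceptable set of a minimal sub-preference. Here the only two-element acceptable set is $\{x,z\}$, and both $\{x\}$ and $\{z\}$ are chosen. So the gap is not a missing lemma: the statement needs an additional hypothesis on $P_h$ before any construction of this kind can succeed.
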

\begin{proof}
    The proof of this theorem is relegated to the Appendix \ref{Apendice puebas feas}.
\end{proof}

Together, the preceding results imply that the domain of pseudo-substitutable preference relations is maximal, as formally stated in the following corollary.

\begin{corollary}
    The domain of pseudo-substitutable preference relations is maximal.
\end{corollary}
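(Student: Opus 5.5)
The corollary combines the existence theorem for pseudo-substitutable profiles with Theorem~\ref{teorema maximal domain}. First I will fix the meaning of maximality. A domain $\mathcal{P}$ of individual preference relations is maximal if two conditions hold. First, $\mathcal{P}$ contains the substitutable domain $\mathcalligra{D}$ and guarantees $S(P)\neq\emptyset$ whenever every agent's preference lies in $\mathcal{P}$. Second, no domain $\mathcal{P}'\supsetneq\mathcal{P}$ has the same guarantee. Equivalently, for every preference $P_a\notin\mathcal{P}$ there are preferences for the other agents, all in $\mathcal{P}$, whose market has no pairwise stable allocation.

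For the first condition, the Remark preceding the existence theorem gives that the pseudo-substitutable domain contains $\mathcalligra{D}$. The existence theorem gives that every pseudo-substitutable profile $P$ satisfies $S(P)\neq\emptyset$.

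For the second condition, suppose some domain $\mathcal{P}'$ strictly enlarges the pseudo-substitutable domain. Pick $P_h\in\mathcal{P}'$ that is not pseudo-substitutable. Theorem~\ref{teorema maximal domain} supplies a market containing $h$ with preference $P_h$, at least one other hospital and two doctors. All other agents have pseudo-substitutable preferences $P_{-h}$, which lie in $\mathcal{P}'$. The market satisfies $S(P)=\emptyset$. So this is a profile entirely within $\mathcal{P}'$ with no pairwise stable allocation, and $\mathcal{P}'$ does not guarantee existence. Hence no such enlargement exists.

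The main obstacle is that Theorem~\ref{teorema maximal domain} is stated only for hospitals, while the domain is defined for all agents. I would close this gap by a symmetry argument. The model treats doctors and hospitals symmetrically: feasibility is unitary on both sides, and blocking and individual rationality are defined symmetrically. Swapping the labels $D$ and $H$ therefore turns the construction for a non-pseudo-substitutable hospital into one for a non-pseudo-substitutable doctor. If the paper instead intends maximality for hospital preferences only, as in the standard Hatfield–Milgrom formulation, no symmetry step is needed. I would state this convention explicitly before writing the argument.
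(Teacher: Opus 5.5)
Your proposal is correct and follows the paper's own route: the paper obtains the corollary simply by combining the existence theorem and the remark that pseudo-substitutability contains substitutability (sufficiency) with Theorem~\ref{teorema maximal domain} (necessity), exactly as you do. Your symmetry step, which extends Theorem~\ref{teorema maximal domain} from a hospital to a doctor by swapping the roles of $D$ and $H$, is something the paper leaves implicit, and it is harmless because feasibility, individual rationality and blocking are all defined symmetrically in the model.
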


\subsection{Properties of Pseudo-substitutable preferences}\label{subseccion de propiedades}
In this subsection, we examine several structural features of the pseudo-substitutable domain. These results show that pseudo-substitutable preferences may behave quite differently from substitutable ones, while still retaining enough order structure to sustain our existence analysis. On the one hand, a pseudo-substitutable preference may admit multiple distinct sub-preferences, and not all of them need to be substitutable. Moreover, since the preference itself is not necessarily substitutable, the induced choice function generally fails to satisfy the \emph{path-independence} property, even though it remains \emph{consistent}. On the other hand, the substitutable sub-preferences of a pseudo-substitutable preference display a strong form of internal coherence: among all sub-preferences, the substitutable ones are minimal with respect to the sub-preference relation, and, furthermore, all substitutable sub-preferences share the same collection of acceptable sets and preserve Blair’s partial order.

To begin with, recall that in Example~\ref{Ejemplo 1} the preference relation $P_h''$ is substitutable and is a sub-preference of $P_h$, which implies that $P_h$ is pseudo-substitutable. However, the preference relation $P'_h$ is not substitutable. As anticipated above, this illustrates that a pseudo-substitutable preference may admit multiple distinct sub-preferences, although at least one of them must be substitutable.

 Next, observe that, for any pseudo-substitutable preference relation \(P_a\), the induced choice function \(C_a^{P}\) satisfies the \emph{consistency} property by construction. However, since \(P_a\) is not necessarily substitutable, the results of \cite{chamb2017choice} imply that the induced choice function does not, in general, satisfy the \emph{path-independence} property. The following example illustrates this point.
\begin{example}
   Consider the pseudo-substitutable preference relation \(P_a : x  y, \emptyset\). Note that \(P'_a = \emptyset\) satisfies \(P'_a \sqsubseteq P_a\) and is substitutable. Observe that
   \[
   C^{P}_a(\{x ,y\}) = \{x ,y\},
   \]
   whereas
   \[
   C^{P}_a\big(C^{P}_a(\{x \}) \cup \{y\}\big)
   = C^{P}_a(\emptyset \cup \{y\})
   = \emptyset.
   \]
   Therefore, the preference relation \(P_a\) fails to satisfy the path-independence property.
\end{example}

We next examine a structural property of pseudo-substitutable preferences related to the organization of their sub-preferences. For a pseudo-substitutable preference, any substitutable sub-preference occupies a minimal position with respect to the sub-preference relation: it has the smallest collection of acceptable sets among all sub-preferences, in the sense of strict inclusion. More formally, for $a \in D \cup H$ and preference relations $P'_a$ and $P_a$ with $P'_a \sqsubseteq P_a$, we say that $P'_a$ is \emph{minimal} if there is no preference relation $P''_a$ such that $P''_a \sqsubseteq P_a$ and $A(P''_a) \subsetneq A(P'_a)$.\footnote{Recall that $A(P_a')$ denotes the collection of acceptable sets of the preference $P'_a$.}

\begin{proposition}\label{proposicion de propiedad 1}Let $a \in D \cup H$, and let $P'_a$ and $P_a$ be preference relations such that $P'_{a}\sqsubseteq P_{a}$. 
If $P'_{a}$ is substitutable, then  $P'_{a}$ is minimal.
    \end{proposition}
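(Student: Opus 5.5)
The plan is to argue by contradiction, exploiting the fact that a substitutable preference has an acceptable collection closed under taking subsets. Suppose $P'_a \sqsubseteq P_a$ with $P'_a$ substitutable, and suppose there is a preference $P''_a \sqsubseteq P_a$ with $A(P''_a)\subsetneq A(P'_a)$. Then the difference $A(P'_a)\setminus A(P''_a)$ is nonempty. I will pick a set $S$ in this difference of \emph{minimal cardinality}. Throughout, all sets are taken inside $X_a$ and feasible, so that $S_a=S$.

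\textbf{Step 1 (closure under subsets).} Since $P'_a$ is substitutable, Remark~\ref{definicion alternatica de contratos subsitutos con inclusiones} gives $C^{P'}_a(S)\cap T\subseteq C^{P'}_a(T)$ for every $T\subseteq S$. Because $C^{P'}_a(S)=S$, this yields $T\subseteq C^{P'}_a(T)$, hence $T\in A(P'_a)$.

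\textbf{Step 2 ($S$ is nonempty).} The empty set satisfies $C_a(\emptyset)=\emptyset$ under every preference, so $\emptyset\in A(P''_a)$. Therefore $S\neq\emptyset$.

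\textbf{Step 3 (proper subsets are $P''$-acceptable).} Fix any $x\in S$ and set $T=S\setminus\{x\}$. By Step 1, $T\in A(P'_a)$. Since $|T|<|S|$, the minimality of $S$ forces $T\in A(P''_a)$, i.e.\ $T_a=C^{P''}_a(T)$.

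\textbf{Step 4 (transfer via condition (ii)).} Since $P'_a\sqsubseteq P_a$, condition (i) of Definition~\ref{definicion de SubPref} gives $S\in A(P_a)$. Hence $C^P_a(T\cup\{x\})=C^P_a(S)=S\ni x$. Now apply condition (ii) for $P''_a\sqsubseteq P_a$ to the $P''$-acceptable set $T$ and the contract $x\notin T$. This yields $x\in C^{P''}_a(T\cup\{x\})=C^{P''}_a(S)$.

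\textbf{Step 5 (conclusion).} Step 4 holds for every $x\in S$, so $S\subseteq C^{P''}_a(S)$, that is, $C^{P''}_a(S)=S$. Thus $S\in A(P''_a)$, contradicting the choice of $S$. Hence no such $P''_a$ exists, and $P'_a$ is minimal.

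The main point needing care is Step 3: the contradiction requires \emph{every} one-element-deleted subset of $S$ to be $P''$-acceptable, not just one of them. This is exactly what the minimal-cardinality choice of $S$, combined with subset-closure from substitutability, delivers. The rest is bookkeeping with the feasibility conventions $X'_a$ in Definition~\ref{definicion de SubPref}.
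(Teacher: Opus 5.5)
Your proof is correct, but it reaches the contradiction by a different mechanism than the paper.

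\textbf{The paper's route.} It takes an arbitrary $X'\in A(P'_a)\setminus A(P''_a)$ and sets $\overline{X}=C^{P''}_a(X')\subsetneq X'$, which is $P''$-acceptable by consistency. It then adds back a single rejected contract $x\in X'\setminus\overline{X}$. Substitutability of $P'_a$ makes $\overline{X}\cup\{x\}$ $P'$-acceptable, hence $P$-acceptable by condition (i). Consistency of $C^{P''}_a$ gives $C^{P''}_a(\overline{X}\cup\{x\})=\overline{X}$. So the pair $(\overline{X},x)$ directly violates condition (ii) for $P''_a\sqsubseteq P_a$.

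\textbf{Your route.} You choose a minimal-cardinality $S$ in the difference, so that every one-element deletion $S\setminus\{x\}$ is $P''$-acceptable. You then use condition (ii) in the positive direction to force $C^{P''}_a(S)=S$.

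\textbf{Comparison.} The paper avoids the extremal choice and needs only one violating contract, at the cost of invoking consistency of $C^{P''}_a$ twice. Yours uses nothing about $C^{P''}_a$ beyond $C^{P''}_a(S)\subseteq S$. In exchange, it relies on the minimality of $|S|$ so that every deletion is covered.

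Neither argument uses $A(P''_a)\subseteq A(P'_a)$, only that $A(P'_a)\setminus A(P''_a)\neq\emptyset$. So both in fact prove $A(P'_a)\subseteq A(P''_a)$ for every $P''_a\sqsubseteq P_a$: a substitutable sub-preference has the least acceptable collection, not merely a minimal one. By symmetry, this yields part 1 of Proposition~\ref{proposicion de propiedad 2} immediately.
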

\begin{proof}The proof of this proposition  is relegated to the Appendix \ref{Apendice puebas de proposiciones}.\end{proof}

Given an agent \( a \), her preference relation \( P_a \), and the induced choice function \( C^{P}_a \), we define a partial order over subsets of contracts following \citet{blair1988lattice}.  
For two subsets of contracts \( X' \) and \( X'' \), 
we say that \( X' \) is \emph{Blair-preferred to} \( X'' \) for agent \( a \) if
\[
X'_a = C_a^{P}(X' \cup X'').
\]

An important feature of our preference domain is that every substitutable sub-preference of a pseudo-substitutable preference shares the same collection of acceptable sets and, moreover, preserves Blair's order. Formally, 

\begin{proposition}\label{proposicion de propiedad 2}
  Let $a \in D \cup H$, and let $P_a$ a pseudo-substitutable preference relation, and let $P'_a$ and $P''_a$ be substitutable preference relations such that $P'_{a}\sqsubseteq P_{a}$ and $P''_{a}\sqsubseteq P_{a}$. Then,  
    \begin{enumerate}
        \item $A(P'_a)=A(P''_a).$
        \item For each $X_1,X_2\in A(P'_a)$, if $X_1=C_a^{P'}(X_1\cup X_2)$, then $X_1=C_a^{P''}(X_1\cup X_2).$
    \end{enumerate}
       \end{proposition}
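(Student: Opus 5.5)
For part 1 the plan is an induction on $|S|$ for $S\in A(P'_a)$, driven by one preliminary fact: for a substitutable preference the collection of acceptable sets is closed under taking subsets. Indeed, if $C^{P'}_a(S)=S$ and $T\subseteq S$, then Remark~\ref{definicion alternatica de contratos subsitutos con inclusiones} gives $T=C^{P'}_a(S)\cap T\subseteq C^{P'}_a(T)\subseteq T$, so $T\in A(P'_a)$. The base case $\emptyset\in A(P''_a)$ is trivial.

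For the inductive step, let $S\in A(P'_a)$ with $S\neq\emptyset$. By Condition (i) of $P'_a\sqsubseteq P_a$ we have $C^P_a(S)=S$. Fix $y\in S$. Then $S\setminus\{y\}\in A(P'_a)$ by downward closure, and so $S\setminus\{y\}\in A(P''_a)$ by the induction hypothesis. We also have $y\in C^P_a((S\setminus\{y\})\cup\{y\})$. Condition (ii) of $P''_a\sqsubseteq P_a$, applied with $X'=S\setminus\{y\}$, then gives $y\in C^{P''}_a(S)$. Since this holds for every $y\in S$, we get $C^{P''}_a(S)=S$, that is, $S\in A(P''_a)$. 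Exchanging the roles of $P'_a$ and $P''_a$ gives the reverse inclusion, so $A(P'_a)=A(P''_a)$. Note that this step uses substitutability only through the downward closure of acceptable sets.

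For part 2 I would argue by contradiction. Let $X_1,X_2\in A(P'_a)$ with $X_1=C^{P'}_a(X_1\cup X_2)$, and set $Z=C^{P''}_a(X_1\cup X_2)$, supposing $Z\neq X_1$. Then $Z\in A(P''_a)=A(P'_a)$ by part 1. Consistency applied to $X_1\cup Z\subseteq X_1\cup X_2$ gives $C^{P'}_a(X_1\cup Z)=X_1$ and $C^{P''}_a(X_1\cup Z)=Z$, so we may replace $X_2$ by $Z$. If $Z\subseteq X_1$, then $C^{P''}_a(X_1)=X_1$ forces $Z=X_1$, a contradiction. So there is some $z\in Z\setminus X_1$.

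Next I would take a counterexample pair $(X_1,Z)$ minimizing $|X_1\cup Z|$. By substitutability of $P''_a$, $z\in C^{P''}_a(X_1\cup\{z\})=:W$, and $W\in A(P'_a)$. By substitutability and consistency of $P'_a$, $C^{P'}_a(X_1\cup W)=X_1$ while $C^{P''}_a(X_1\cup W)=W\neq X_1$. Hence $(X_1,W)$ is again a counterexample, and its union $X_1\cup\{z\}$ can only be larger than or equal to... I would apply the symmetric reduction on the other side as well. Minimality should then force the configuration $X_1=T\cup\{x\}$, $Z=T\cup\{z\}$ with $U=T\cup\{x,z\}$. Along the way, Condition (ii) in contrapositive form yields $z\notin C^P_a(X_1\cup\{z\})$ and $x\notin C^P_a(Z\cup\{x\})$.

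The main obstacle is closing this last configuration. Conditions (i) and (ii) transmit information only from $P_a$ to the sub-preferences and only at acceptable base sets, so the two opposite rankings of $T\cup\{x\}$ and $T\cup\{z\}$ must be ruled out some other way. What I would try first is to show that $C^P_a(U)$, which contains neither $x$ nor $z$, must equal $T\cup\{x\}$ (respectively $T\cup\{z\}$) by comparing it with the acceptable sets of $P'_a$ (respectively $P''_a$). This would give a contradiction with strictness of $P_a$. If that fails, I would check whether the unitarity restriction (for instance, $x$ and $z$ involving the same counterpart, so that $U$ is infeasible) is what actually removes the case.
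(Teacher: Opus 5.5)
Your part~1 is correct and rests on the same step as the paper's argument. That step is Condition~(ii) for $P''_a$ applied at a $P''_a$-acceptable base lying inside a $P'_a$-acceptable set, plus downward closure of $A(P'_a)$. The paper avoids induction by taking as base $C^{P''}_a(X_1)$ for a putative $X_1\in A(P'_a)\setminus A(P''_a)$, which is $P''_a$-acceptable by consistency. Adding one $x\in X_1\setminus C^{P''}_a(X_1)$ then contradicts consistency of $C^{P''}_a$.

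Part~2, however, stops exactly at the decisive step, and your diagnosis that Conditions~(i) and~(ii) cannot finish it is mistaken. The reduction itself is sound once the trailing sentence is completed. The pair $(X_1,W)$ has union $X_1\cup\{z\}$, so minimality of $|X_1\cup Z|$ forces $Z\setminus X_1=\{z\}$. Also $X_1\setminus Z\neq\emptyset$, since otherwise $C^{P'}_a(Z)=X_1\neq Z$, contradicting $Z\in A(P'_a)$; so the symmetric pair $(C^{P'}_a(Z\cup\{x\}),Z)$ forces $X_1\setminus Z=\{x\}$.

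The missing observation is that in this configuration your two contrapositive facts concern the same set, $X_1\cup\{z\}=Z\cup\{x\}=U$, and that $C^P_a$ is consistent. From $z\notin C^P_a(U)$ you get $C^P_a(U)\subseteq X_1\subseteq U$. Consistency and Condition~(i) then give $C^P_a(U)=C^P_a(X_1)=X_1$, so $x\in C^P_a(U)$, contradicting $x\notin C^P_a(Z\cup\{x\})$. The closing you sketch instead is self-contradictory as stated: a set containing neither $x$ nor $z$ cannot equal $T\cup\{x\}$. Unitarity plays no role, since $C^P_a(U)$ is just the $P_a$-best feasible subset of $U$.

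Bringing $P_a$ in through the contrapositive of Condition~(ii) is nonetheless exactly what is needed, and it is where your route departs from the paper's. The paper's proof of part~2 never uses Condition~(ii). It relies only on $A(P'_a)=A(P''_a)$, substitutability, consistency and path independence. In Subcase~2.1 it infers $x'\notin C^{P''}_a(X_3\cup\{x\})$ from $A(P'_a)=A(P''_a)$, which does not follow; consistency in fact gives $C^{P''}_a(X_3\cup\{x\})=X_3\ni x'$.

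No argument built from those ingredients alone can succeed. Take $P'_a: x, z, \emptyset$ and $P''_a: z, x, \emptyset$: both are substitutable with the same acceptable sets, but they rank $\{x\}$ and $\{z\}$ oppositely in Blair's order. What prevents them from being sub-preferences of a common $P_a$ is Condition~(ii), which would force $C^P_a(\{x,z\})=\emptyset$ even though $\{x\}\in A(P_a)$. Completed with the line above, your argument proves part~2.
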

\begin{proof}
  The proof of this proposition  is relegated to the Appendix \ref{Apendice puebas de proposiciones}.
\end{proof}

The following example illustrates that the substitutable sub-preference relations of a pseudo-substitutable preference relation share the same collection of acceptable sets and preserve Blair's partial order among them.

\begin{example}
   Consider the pseudo-substitutable preference relation 
$$P_a:x yw,\,x yz,\,z,\,x y,\,x w,\,yw,\,x ,\,y,\,w,\,\emptyset.$$ 
Note that 
$$P'_a:z,\,x yw,\,\boldsymbol{x y,\,x w,\,yw},\,x ,\,y,\,w,\,\emptyset$$  
and 
$$P''_a:z,\,x yw,\,\boldsymbol{yw,\,x w,\,x y},\,x ,\,y,\,w,\,\emptyset$$  
are both substitutable sub-preference relations of $P_a$. 
Furthermore, we can observe that in both $P'_a$ and $P''_a$ the Blair partial order among the sets of contracts is preserved. Moreover, the sets highlighted in boldface, which are \emph{incomparable} under Blair's partial order, are precisely the ones whose relative ranking differs across the two sub-preference relations.\hfill$\Diamond$
\end{example}

\section{Positioning Pseudo-Substitutability within the Literature}\label{seccion pocicionando a presudo en la lieteratura}

The papers reviewed in this section study domain restrictions that guarantee the existence of \emph{corewise stable} allocations in matching-with-contracts environments. 
By contrast, our paper focuses on the more tractable notion of \emph{pairwise stability}, which rules out only bilateral deviations. 
This distinction is not merely technical: it reflects a different modeling objective. 
Corewise stability captures collective deviations requiring coordination among multiple agents, whereas pairwise stability describes decentralized markets in which deviations typically occur through individual or bilateral agreements. 
Our analysis therefore complements such literature by identifying the largest preference domain that guarantees pairwise stability, rather than corewise stability.
To properly understand the scope and contribution of our results, it is necessary to review and compare them with the related literature. 
The following subsections carry out this discussion in detail.

\subsection{Relationship to the Weak Substitutability Domain} 
\cite{hatfield2008matching} revisit the claim by \cite{HatfieldMilgrom2005} that substitutability of contracts is not only sufficient but also necessary for the existence of a core-wise stable allocation. They show that this claim does not hold in general by providing a counterexample in which a hospital’s preferences violate substitutability, yet a corewise stable allocation exists. Their argument relies on decomposing the hospital into two virtual sub-hospitals, each satisfying substitutability, and then proving that the stability of the associated decomposed problem implies the corewise stability of the original one. This idea, originally inspired by \cite{KlausKlijn2005}, reveals that corewise stability can survive even when global substitutability fails, provided that the domain can be factored into substitutable components. 

Building on this insight, \cite{hatfield2008matching}  propose the \emph{weak substitutes} condition, a relaxation of substitutability that excludes comparisons between sets containing multiple contracts with the same doctor. They show that while this weaker condition is necessary for the existence of a corewise stable allocation, it is not sufficient, since even pairwise stability may fail under it. The authors close by noting that ``a condition on preferences that is both sufficient and necessary for guaranteeing existence is still an open question.'' 

Conceptually, our notion of \emph{pseudo-substitutability} can be viewed as a response to this open problem. It extends the insight of \cite{hatfield2008matching}  beyond the agent-based decomposition of preferences, identifying instead substitutable \emph{sub-preferences} at the domain level. In doing so, it preserves pairwise stability through a structural relaxation that is both sufficient and necessary for the existence of a pairwise stable allocation, thus providing a constructive counterpart to the question left open in \cite{hatfield2008matching}.

\subsection{Relationship to the Bilateral Substitutability Domain}

\cite{HatfieldKojima2010} study a many-to-one matching-with-contracts model. They introduce a weaker notion, called \emph{bilateral substitutability}, to isolate the components of substitutability that are essential for ``stability''. They show that bilateral substitutability is sufficient to ensure the existence of corewise stable matchings, although it is not a necessary condition.

As noted by \cite{HatfieldKojima2010}, the \emph{bilateral substitutability} condition strictly weakens \emph{weak substitutability}. Hence, the weakly substitutable domain is contained within the bilaterally substitutable domain, although the two notions differ in the types of complementarities they permit.

Our framework pursues a similar objective: identifying minimal relaxations of substitutability that remain compatible with ``stability''. While bilateral substitutability allows limited complementarities across doctors, pseudo-substitutability permits complementarities within preference structures, provided these can be decomposed into substitutable sub-preferences. Both notions preserve ``stability'' through structured relaxations, yet they operate at different analytical levels---behavioral in the case of bilateral substitutability, and structural in the case of pseudo-substitutability. 

For completeness, we introduce the definition of bilateral substitutability.

\begin{definition}
  A preference  $P_{h}$ is \textbf{bilateral substitutable} if there are no contracts $
x,z\in X,$ and $Y\subseteq X,$ such that $ x_D , z_D
\notin  Y_D ,$ $z\notin C_{h}(Y\cup \{z\})$ and $z\in
C_{h}(Y\cup \{x,z\}).$
\end{definition}

Now, we show that the bilateral substitutable domain and the pseudo-substitutable domain are independent, in the sense that neither implies the other.

Let $h$ be a hospital and consider the bilaterally substitutable preference
\[
P_{h}:xz,x, z', z,\emptyset
\]
where the doctor component satisfies \(x_{D} \neq z_{D} = z'_{D}\).
Note that none of the possible sub-preferences of \(P_{h}\) is substitutable.

Now, let $h$ be a hospital and consider the pseudo-substitutable preference 
\[
P_{h}: xz, \emptyset,
\]
where \(x_{D} \neq z_{D}\). 
Observe that the sub-preference 
\(P'_{h} = \emptyset\) is substitutable. 
On the other hand, this preference is not bilaterally substitutable because, 
for $Y =\{ \emptyset\}$, \(P_{h}\) does not satisfy the definition of bilateral substitutability.

This suggests that the relationship among the domains of substitutable, 
bilaterally substitutable, and pseudo-substitutable preferences 
is the one illustrated in Figure~\ref{figura relacion bilateral y pseudo}.

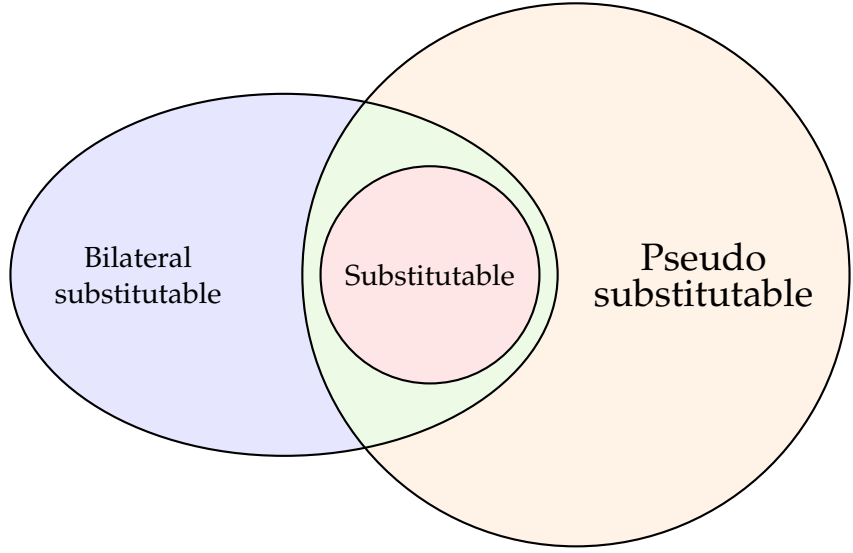
\begin{figure}
    \centering
  \begin{tikzpicture}[scale=1.2]

  % Parámetros
  \def\R{3.0}      % radio del círculo derecho
  \def\d{3.2}      % distancia entre centros

  % Centros
  \coordinate (L) at (-\d/2, 0);
  \coordinate (R) at ( \d/2, 0);

  % --- Conjunto izquierdo: Substitutable completion (ELIPSE) ---
  \fill[blue!10] (L) ellipse (3 and 2);

  % --- Conjunto derecho: Pseudo-substitutable (círculo) ---
  \fill[orange!10] (R) circle (\R);

  % --- Intersección coloreada ---
  \begin{scope}
    \clip (L) ellipse (3 and 2);
    \fill[green!10, opacity=0.7] (R) circle (\R);
  \end{scope}

  % --- Contorno del círculo derecho ---
  \draw[thick] (R) circle (\R);
 \draw[thick]   (L) ellipse (3 and 2);
  % --- Etiquetas ---
  \node at (-3.2,0.2) {\small Bilateral};
  \node at (-3.2,-0.2) {\small substitutable};
  \node at ( 3.0,0.2) {\large Pseudo};
  \node at ( 3.0,-0.2) {\large substitutable};

  % --- Círculo interior: Substitutable ---
  \def\r{1.2}
  \coordinate (I) at (0, 0);
  \fill[red!10] (I) circle (\r);
  \draw[thick]  (I) circle (\r);
  \node at (I) {\small Substitutable};

\end{tikzpicture}
    \caption{Relationship between Bilateral Substitutability and Pseudo-Substitutability}

    \label{figura relacion bilateral y pseudo}
\end{figure}

\subsection{Relationship to the Substitutable Completion Domain}

\cite{HatfieldKominersHiddenSubstitutes2019} show that some preferences that seem to involve complementarities can actually be explained by an underlying substitutable structure. They argue that a non-substitutable many-to-one preference can be extended to a \emph{substitutable completion} in a many-to-many setting, where stability is recovered and then projected back to the original market.

In contrast, while Hatfield and Kominers focus on many-to-one settings and show that substitutable completability guarantees corewise stability, our paper establishes the existence of pairwise stable matchings directly in many-to-many markets. In our framework, \emph{pseudo-substitutability} defines a maximal domain for pairwise stability within the original preference space, without extending it to a contract environment.

For completeness, we introduce the definition of bilateral substitutability.

\begin{definition}
A \textbf{completion} of a hospital's preference relation $P_h$ 
is an extended preference relation $\widehat{P}_h$ 
that coincides with $P_h$ on the feasible subsets of $X_h$. 
A preference relation $P_h$ is said to be a \textbf{substitutable completion} 
if its completion $\widehat{P}_h$ is substitutable.
\end{definition}
 Now, we analyze different scenarios.
 
Let $h$ be a hospital, and consider the non-substitutable preference relation
\[
P_{h}:  xy ,z ,y ,x ,\emptyset,
\]
where \(x_{D} = z_{D} \neq y_{D}\).
Note that $P_h$ can be completed to the substitutable preference relation
\[
\overline{P}_{h}:  xz ,xy ,z ,y 
,x ,\emptyset.
\]
Moreover, the substitutable preference relation
\[
P'_{h}:  z ,xy ,y ,x ,\emptyset
\]
is a sub-preference of $P_{h}$. 
Hence, $P_{h}$ is both substitutably completable and pseudo-substitutable, 
but it is not substitutable.

Now, consider the non-substitutable preference relation 
\[
\widetilde{P}_{h}:  xz ,\emptyset,
\]
where \(x_{D} \neq z_{D}\).
Note that although \(\widetilde{P}_{h}\) is pseudo-substitutable 
(since \(\widetilde{P}'_{h}:\emptyset\) is a substitutable sub-preference relation), 
it cannot be completed to a substitutable preference relation. 
This is because the singletons formed by contracts \(x\) or \(z\) 
cannot appear in any completion that preserves substitutability.
Finally, consider the non-pseudo-substitutable preference relation
\[
\widehat{P}_{h}:  xy ,y,z 
,x ,\emptyset,
\]
where \(x_{D} = z_{D} \neq y_{D}\).
Observe that $\widehat{P}_h$ can be completed to the substitutable preference relation
\[
\widehat{P}'_{h}:  xy ,y ,zx 
,z ,x ,\emptyset.
\]

This suggests that the relationship among the domains of substitutable, 
substitutable completion, and pseudo-substitutable preferences 
is the one illustrated in Figure~\ref{figura relacion completion y pseudo}.

\begin{figure}
    \centering
  \begin{tikzpicture}[scale=1.2]

  % Parámetros
  \def\R{3.0}      % radio del círculo derecho
  \def\d{3.2}      % distancia entre centros

  % Centros
  \coordinate (L) at (-\d/2, 0);
  \coordinate (R) at ( \d/2, 0);

  % --- Conjunto izquierdo: Substitutable completion (ELIPSE) ---
  \fill[blue!10] (L) ellipse (3 and 2);

  % --- Conjunto derecho: Pseudo-substitutable (círculo) ---
  \fill[orange!10] (R) circle (\R);

  % --- Intersección coloreada ---
  \begin{scope}
    \clip (L) ellipse (3 and 2);
    \fill[green!10, opacity=0.7] (R) circle (\R);
  \end{scope}

  % --- Contorno del círculo derecho ---
  \draw[thick] (R) circle (\R);
 \draw[thick]   (L) ellipse (3 and 2);
  % --- Etiquetas ---
  \node at (-3.2,0.2) {\small Substitutable};
  \node at (-3.2,-0.2) {\small completion};
  \node at ( 3.0,0.2) {\large Pseudo};
  \node at ( 3.0,-0.2) {\large substitutable};

  % --- Círculo interior: Substitutable ---
  \def\r{1.2}
  \coordinate (I) at (0, 0);
  \fill[red!10] (I) circle (\r);
  \draw[thick]  (I) circle (\r);
  \node at (I) {\small Substitutable};

\end{tikzpicture}
    \caption{Relationship between Substitutable completion and Pseudo-Substitutability}
    \label{figura relacion completion y pseudo}
\end{figure}
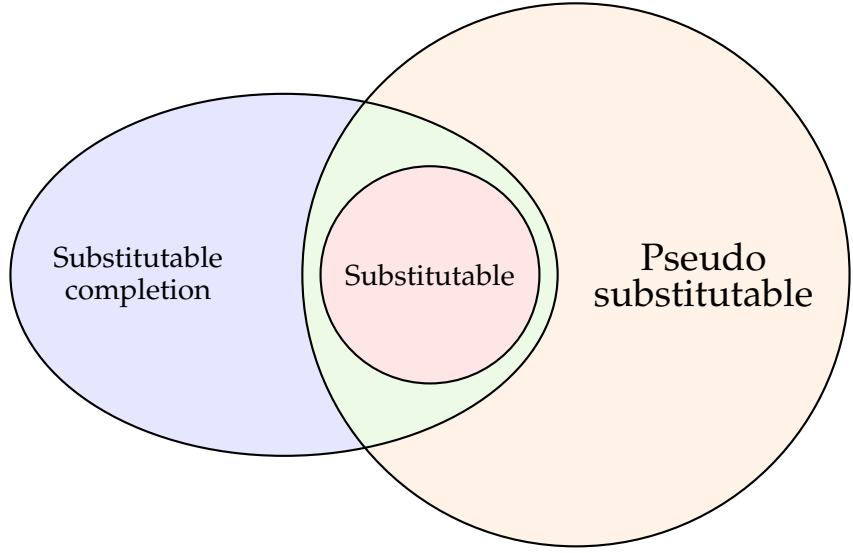

\subsection{Relationship to the WOSAD Domain} 

\cite{BandoHiraiZhang2021} examine how far the substitutability requirement can be relaxed in many-to-many matching with contracts while preserving the existence of corewise stable allocations. The central notion is \emph{weak observable substitutability across doctors} (WOSAD), defined by combining the no substantial violation of substitutability condition with observable irrelevance of rejected contracts, both applied only to offer processes that arise along the cumulative-offer algorithm. Because these restrictions operate exclusively on observable histories, WOSAD is strictly weaker than classical global substitutability requirements.

Compared with \emph{substitutable completability}, WOSAD imposes substantially fewer structural constraints. Under substitutable completability, a hospital's choice function must admit a completion that is substitutable and satisfies irrelevance of rejected contracts. Any such choice function automatically satisfies WOSAD, but the converse fails: WOSAD allows choice functions that cannot be completed to a substitutable one. Thus, the domain of substitutably completable choice functions is properly contained within the WOSAD domain.

The relation with \emph{bilateral substitutability} follows a similar pattern. Bilateral substitutability permits a previously rejected contract to be accepted only when the doctor changes her menu of offers. WOSAD relaxes this restriction further by regulating only those violations that arise along the observable offer paths generated by the cumulative-offer algorithm. Hence, while bilateral substitutability guarantees WOSAD, the reverse does not hold.

WOSAD alone is not sufficient to guarantee the existence of corewise stable allocations in many-to-many markets. The paper provides examples in which hospitals satisfy WOSAD and all other agents have substitutable choice functions, yet no corewise stable allocation exists. The main contribution is a joint sufficiency result: if hospitals satisfy WOSAD and doctors satisfy substitutability together with size monotonicity, then a modified cumulative-offer algorithm always yields a corewise stable allocation. Moreover, WOSAD is not a maximal domain ensuring ``stability'': corewise stable allocations may fail to exist even under stronger assumptions such as bilateral substitutability combined with monotonicity and size monotonicity.

The domain of \emph{pseudo-substitutable preferences} exhibits a nuanced relationship with WOSAD. As discussed above, there are pseudo-substitutable preferences that satisfy substitutable completability and others that satisfy bilateral substitutability. Hence, a substantial subset of pseudo-substitutable preferences is automatically contained in the WOSAD domain. However, this inclusion is not proper. In the example presented below, we display a pseudo-substitutable preference that violates WOSAD, thereby showing that pseudo-substitutability does not imply WOSAD. Taken together, these observations establish that the two domains are \emph{independent}: each contains preference structures that the other does not. Consequently, neither pseudo-substitutability nor WOSAD can be regarded as a refinement of the other, and both represent distinct and non-nested relaxations of substitutability.

Recall the pseudo-substitutable preference $\widetilde{P}_{h}:xz, \emptyset$. Following the approach of \cite{BandoHiraiZhang2021}, 
consider the cumulative offer process $\mathbf{x} = (x,z)$ and its two 
subprocesses $\mathbf{x}^{1} = (x)$ and $\mathbf{x}^{2} = (x,z)$. 
Observe that $x \in \mathbf{x}^{1}$ but $x \notin C_{h}(\mathbf{x}^{1})$. 
Moreover, $x \in C_{h}(\mathbf{x}^{2})$ while 
$x_{D} \notin [C_{h}(\mathbf{x}^{1})]_{D}$. 
Hence, $h$ accepts $x$ in the longer observable subprocess despite the fact 
that $x_{D}$ was not among the doctors chosen in the shorter subprocess. 
This constitutes a substantial observable violation of substitutability. 
Therefore, the cumulative offer process $\mathbf{x}$ fails the WOSAD condition.

Hence, by combining the Venn diagrams in Figures~\ref{figura relacion bilateral y pseudo} 
and~\ref{figura relacion completion y pseudo} with the discussion developed in this subsection, 
we obtain a unified representation of the relationship among substitutability, substitutable completability, 
bilateral substitutability, WOSAD, and pseudo-substitutability. 
This consolidated view is depicted in Figure~\ref{figura relacion todos y pseudo}.

\begin{figure}[h]
    \centering
    \begin{tikzpicture}[scale=1.1]

      % Centros fijos de las elipses / círculos
      \coordinate (L) at (-1.8, 0);   % centro izquierda
      \coordinate (R) at ( 1.8, 1);   % centro derecha
      \coordinate (I) at (0, 0);      % centro del círculo interior
      \coordinate (P) at (0, -1.8);   % centro elipse inferior
      \coordinate (C) at (-1.8, -1.8);% centro círculo WOSAD

      % =========================
      %   RELLENOS (fondo)
      % =========================

      % WOSAD (círculo grande izquierdo/inferior)
      \fill[yellow!15] (C) circle (4.2);

      % Pseudo-substitutable (círculo grande derecho)
      \fill[orange!25, opacity=0.45] (R) circle (4.2);

      % Substitutable completability (elipse izquierda)
      \fill[blue!25, opacity=0.5] (L) ellipse (3.2 and 2.2);

      % Bilateral substitutable (elipse inferior)
      \fill[green!25, opacity=0.45] (P) ellipse (2.2 and 3.2);

      % Substitutable “puro” (círculo interior)
      \fill[red!25, opacity=0.7] (I) circle (1.2);

      % =========================
      %   CONTORNOS (encima)
      % =========================

      \draw[thick] (C) circle (4.2);
      \draw[thick] (P) ellipse (2.2 and 3.2);
      \draw[thick] (L) ellipse (3.2 and 2.2);
      \draw[thick] (R) circle (4.2);
      \draw[thick] (I) circle (1.2);

      % =========================
      %   ETIQUETAS
      % =========================

      % Substitutable completability
      \node at (-3.6,0.4) {\small Substitutable};
      \node at (-3.6,0.0) {\small completion};

      % Bilateral substitutable
      \node at ( 0.0,-3.9) {\small Bilateral};
      \node at ( 0.0,-4.3) {\small Substitutable};

      % Pseudo-substitutable
      \node at ( 3.2,2.4) {\large Pseudo};
      \node at ( 3.2,2.0) {\large substitutable};

      % Substitutable (núcleo interior)
      \node at (I) {\small Substitutable};

      % WOSAD
      \node at (-3.5,-3.5) {\small WOSAD}; 

    \end{tikzpicture}
    \caption{Relationship between the domain of Pseudo-substitutable preferences and other domains.}
    \label{figura relacion todos y pseudo}
\end{figure}
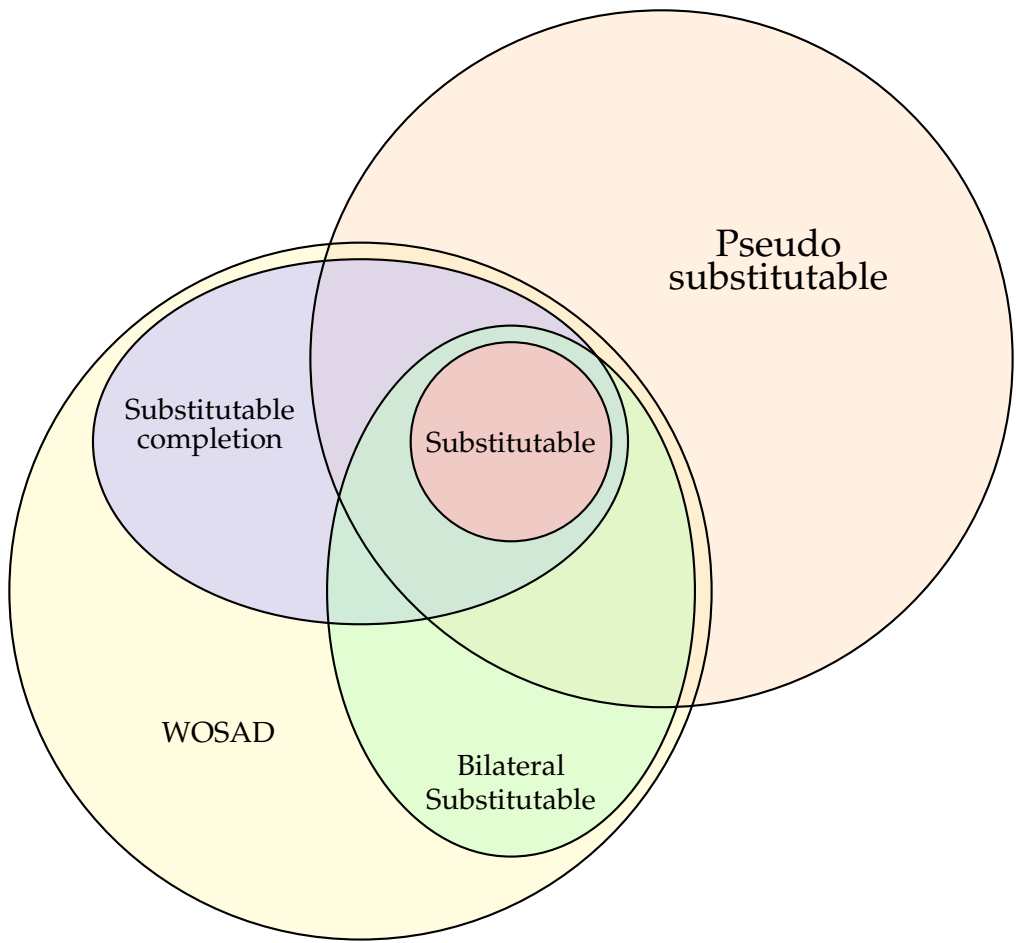

\section{Final Remarks on Our Contribution}\label{final remark}

Our paper introduces the concept of \emph{pseudo-substitutable preferences}, a novel domain that generalizes the classical notion of substitutability. 
Preferences in this domain retain the desirable properties of substitutable preferences—most importantly, the existence of pairwise stable allocations—while allowing for limited complementarities among contracts. 
We formalize this idea through the concept of a \emph{sub-preference}: a simplified version of a preference that preserves its essential structure and remains consistent with the original ordering. 
A preference is pseudo-substitutable if it admits at least one substitutable sub-preference, and all such sub-preferences are minimal and share the same set of acceptable contracts. 
This provides a canonical representation of the substitutable structure within a broader, possibly non-substitutable preference.

Another key distinction concerns the role of contracts.  
All previous papers—from \cite{hatfield2008matching,HatfieldKojima2010} and \cite{HatfieldKominers2017} to \cite{HatfieldKominersHiddenSubstitutes2019,Tello2016} and \cite{BandoHirai2025}—derive substitutability-type conditions defined over \emph{contracts involving the same agent}.  
Their stability results rely critically on the internal structure of these contracts: substitutability, weak substitutability, bilateral substitutability, and related notions all restrict how the acceptance of one contract affects the desirability of others.  
In contrast, our notion of pseudo-substitutability is defined directly over preferences, without imposing any contract-specific restrictions.
Hence, our domain is not tied to the contractual formulation and remains meaningful even in a standard many-to-many matching model without contracts.

This distinction becomes especially clear when comparing contract-based and contract-free settings. 
If one restricts attention to matching models \emph{without} contracts, all the domains proposed in the literature—weak, unilateral, or bilateral, as well as substitutability across doctors—collapse to the standard domain of substitutable preferences. 
That is, once the contractual variables are suppressed, all these refinements coincide with substitutability. 
In contrast, the domain of pseudo-substitutable preferences continues to be a \emph{strict superdomain} of substitutability even in contract-free environments. 
This observation highlights the structural nature of our approach: pseudo-substitutability captures a fundamental relaxation of substitutability that does not rely on the presence of contracts, thereby providing a unified and more general foundation for the analysis of pairwise stability in both contract-based and contract-free matching models.

\subsection*{Some Insights on Stability in Non-Binding Matching Markets}

Many real-world centralized matching markets are inherently non-binding. In such environments, once the clearing house proposes an allocation, agents may unilaterally refuse to accept the assigned contracts. A prominent example is the market for medical residents, where doctors are not forced to accept the matches proposed by the centralized mechanism. In these settings, the classical notion of stability may be insufficient, since an allocation that is stable ex ante may unravel ex post due to individual rejections and, as a consequence, may fail to remain individually rational for some agents.

Our results provide a systematic tool to identify allocations that display a weaker but economically meaningful notion of robustness. Rather than preventing the allocation from unraveling altogether, this notion ensures that, if an allocation breaks down due to the unilateral disengagement of an agent, the remaining allocation continues to satisfy individual rationality. This contrasts with what may occur under corewise stable allocations. In particular, pseudo-substitutability allows us to select pairwise stable allocations that preserve individual rationality even when some agents choose not to accept the proposed allocation.

To illustrate this point, consider a market with a single hospital whose preferences over sets of contracts exhibit complementarities and are given by
\[
P_h:\ zw,xy,x,y,\emptyset,
\]
where all doctors find all contracts acceptable. In this market, the unique corewise stable allocation is $\{z,w\}$, while the set of pairwise stable allocations is $\{x,y,z,w\}$. Although the hospital’s preference is not substitutable, it is pseudo-substitutable, since the sub-preference
\[
P'_h:\ xy,x,y, \emptyset
\]
is substitutable.

Observe that the corewise stable allocation $\{z,w\}$ lacks this robustness in a non-binding environment: if the doctor associated with contract $w$ refuses the match, the hospital is left with contract $z$, which is unacceptable on its own. By contrast, our approach selects the pairwise stable allocation induced by the substitutable sub-preference, namely $\{x,y\}$. In this case, if one of the doctors rejects the proposed match, the hospital remains matched with an acceptable contract.

This example highlights the economic content of pseudo-substitutability. When preferences satisfy this condition, the pairwise stable allocation arising from a substitutable sub-preference guarantees that unilateral disengagements do not lead to violations of individual rationality. In this sense, pseudo-substitutability provides a natural bridge between stability and a notion of robustness that is well suited to non-binding matching markets.

\appendix
\section{Appendix: Proof of Theorem \ref{teorema maximal domain}}\label{Apendice puebas feas}

In this section, we provide the proof of Theorem \ref{teorema maximal domain}. 
We begin by examining how preferences may exhibit complementarities between contracts. 
While substitutable preferences are characterized by the absence of such complementarities, 
in more general preference domains complementary relationships can arise, and these can be classified into distinct types. 
Understanding these distinctions is essential for analyzing the structure of pseudo-substitutable preferences 
and for establishing the maximality result in Theorem \ref{teorema maximal domain}.

Let $h$ be a hospital with preference relation $P_h$. 
Consider a pair of contracts $x^1, x^2 \in X'$, 
where $X' \subseteq X$ is an acceptable set of contracts involving hospital $h$; 
that is, $X' = C^P_h(X')$. 
We first introduce the notion of complementarity. 
We say that $x^1$ is \emph{complementary} to $x^2$ for hospital $h$ 
 if 
$x^1 \notin C^P_h(X' \setminus \{x^2\})$ 
while 
$x^2 \in C^P_h(X' \setminus \{x^1\})$. 
In other words, $x^1$ is only chosen in the presence of $x^2$, 
while $x^2$ can still be chosen without $x^1$. 
We denote this by $x^2 \to x^1$.

If instead both contracts depend on one another—namely, if 
$x^2 \notin C^P_h(X' \setminus \{x^1\})$ 
and 
$x^1 \notin C^P_h(X' \setminus \{x^2\})$—we say that 
$x^1$ and $x^2$ are \emph{bi-complementary}. 
In this case, neither contract is chosen without the other, 
and we write $x^1 \leftrightarrows x^2$. 

A single preference relation $P_h$ over a set of contracts $X'$ 
may exhibit both types of complementarities simultaneously. 
By contrast, under substitutable preferences no complementarities 
between contracts are present.

Before turning to the proof of Theorem \ref{teorema maximal domain}, 
we present a technical lemma that pinpoints how minimal sub-preferences separate 
the pseudo-substitutable domain from its complement. 
If a preference relation is \emph{not} pseudo-substitutable, 
\emph{any} minimal sub-preference necessarily contains a  complementary pair of contracts. 
By contrast, when preferences are pseudo-substitutable, there \emph{exists} a sub-preference 
that is substitutable, and this sub-preference is \emph{minimal}. 
Hence, in the pseudo-substitutable case one can select a minimal sub-preference 
that is free of complementarities, whereas in the non–pseudo-substitutable case 
every minimal sub-preference must exhibit  complementarity.

%\textcolor{red}{We define the following set
%$$
%\mathcal{S}=\{S\subseteq W:\exists~ w^{\star},\widetilde{w}\in C^{P'}_h(S) \text{ with }w^{\star}\notin C^{P'}_h(S\setminus \{\widetilde{w}\}) \text{ and }\widetilde{w}\notin
%C^{P'}_h(S\setminus \{w^{\star}\})\}$$}

%\textcolor{red}{ \begin{lemma}
%Let $P_{h}$ be a no pseudo-substitutable preference and 
%$P_{h}'\sqsubseteq P_{h},$ minimal. Then $\mathcal{S}= \emptyset$
%\end{lemma} }

\begin{lemma}\label{P_h no pseudo, entonces unicomplementaria}
Let $h\in H$, $P_{h}$ be a preference relation that is not pseudo-substitutable. Let  $P'_h$ be any subpreference relation of $P_h$ that is minimal, then there is a set of contracts $X'\in A(P'_h)$ and two contracts $x^{1},x^{2}\in C^{P'}_h(X')$  such that  $x^1$ is complementary to $x^2$, i.e.,
\begin{equation*}\label{ecu eninciado lemma}
    x^{1}\notin
C^{P'}_h(X'\setminus \{x^2\}) ,\text{ and }x^{2}\in
C^{P'}_h(X'\setminus \{x^1\}).
\end{equation*}
\end{lemma}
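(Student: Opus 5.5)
The plan is to argue by contradiction with the minimality of $P'_h$. Since $P_h$ is not pseudo-substitutable, no sub-preference of $P_h$ is substitutable. In particular the given minimal $P'_h\sqsubseteq P_h$ is not substitutable. Hence there are a set $X''$ and distinct contracts $x,x'$ with $x'\in C^{P'}_h(X'')$ but $x'\notin C^{P'}_h(X''\setminus\{x\})$.

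\textbf{Step 1: move the violation inside an acceptable set.} Put $Y=C^{P'}_h(X'')$, so $Y\in A(P'_h)$. If $x\notin Y$, then $C^{P'}_h(X'')\subseteq X''\setminus\{x\}\subseteq X''$, and consistency gives $C^{P'}_h(X''\setminus\{x\})=C^{P'}_h(X'')\ni x'$, a contradiction. So $x,x'\in Y$.

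We still need $x'\notin C^{P'}_h(Y\setminus\{x\})$, and this does not follow from consistency alone. To obtain it, I will choose the violating triple $(X'',x,x')$ with $|X''|$ minimal. I will then try to show that $X''=Y$, by checking that any $z\in X''\setminus Y$ could be deleted while keeping a violation. In that check, consistency gives $C^{P'}_h(X''\setminus\{z\})=Y$, and one compares the $P'_h$-best sets in $X''\setminus\{x\}$ and $X''\setminus\{x,z\}$. If this reduction fails, I will fall back on the fact that $P'_h$ is a strict order. Fixing $Y$, I would take the $P'_h$-best subset of $Y\setminus\{x\}$ and run the argument on $Y$ directly: any acceptable $Y$ containing $x$ whose restriction $Y\setminus\{x\}$ drops some other chosen contract already gives a complementarity inside an acceptable set. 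The non-substitutability of $P'_h$ should force such a $Y$ to exist, possibly with a different pair of contracts. I expect this step to be the main technical obstacle.

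\textbf{Step 2: use the classification of complementarities.} With $X'\in A(P'_h)$ and $x,x'\in X'$, $x'\notin C^{P'}_h(X'\setminus\{x\})$ in hand, there are two cases. If $x\in C^{P'}_h(X'\setminus\{x'\})$, then $x^1=x'$ and $x^2=x$ satisfy the statement and we are done. Otherwise $x\leftrightarrows x'$ is bi-complementary. So suppose, towards a contradiction, that every complementarity occurring in every acceptable set of $P'_h$ is a bi-complementarity.

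\textbf{Step 3: contradict minimality.} Let $\mathcal{B}\subseteq A(P'_h)$ consist of the nonempty acceptable sets that contain a bi-complementary pair. Build $P''_h$ from $P'_h$ by moving every set of $\mathcal{B}$ below $\emptyset$ and keeping the relative order of everything else. Then $A(P''_h)\subsetneq A(P'_h)$, the inclusion being strict since the $X'$ of Step 1 lies in $\mathcal{B}$.

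It remains to check $P''_h\sqsubseteq P'_h$; transitivity (Lemma \ref{lema de trasitividad}) then yields $P''_h\sqsubseteq P_h$, contradicting minimality of $P'_h$.
\begin{enumerate}[(i)]
\item Condition (i) is immediate, because on the sets kept acceptable $P''_h$ ranks them exactly as $P'_h$ does.
\item For condition (ii), take $Z\in A(P''_h)$ and $y$ with $y\in C^{P'}_h(Z\cup\{y\})$. Put $W=C^{P'}_h(Z\cup\{y\})$. If $W\notin\mathcal{B}$, then $W$ is still the $P''_h$-best subset, and $y$ is chosen under $P''_h$. If $W\in\mathcal{B}$, the bi-complementary pair in $W$ must involve $y$, since $Z$ itself carries none. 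Here the hypothesis that all complementarities are bi-complementary is used: it forces the partner of $y$ also to be rejected from $Z$. One then shows that the $P''_h$-choice from $Z\cup\{y\}$ still contains $y$, or else that $W$ can be replaced by a non-$\mathcal{B}$ acceptable superset of $Z$ containing $y$. Failing that, I would shrink $\mathcal{B}$ to a single maximal bi-complementary set so that this verification goes through.
\end{enumerate}
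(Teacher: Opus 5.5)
Your Step 1 is more than the main obstacle: it cannot be carried out, because the lemma as stated is false. Let $a,b,c$ be contracts of $h$ with three distinct doctors, and let
\[
P_h:\ \{a,b\},\ \{b,c\},\ \{c\},\ \{a\},\ \{b\},\ \emptyset ,
\]
with $\{a,c\}$ and $\{a,b,c\}$ unacceptable. Then $C^P_h(\{a,b,c\})=\{a,b\}$, $C^P_h(\{a,c\})=\{c\}$, and $A(P_h)=\{\emptyset,\{a\},\{b\},\{c\},\{a,b\},\{b,c\}\}$.

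Take any $Q\sqsubseteq P_h$ and apply Condition (ii) in turn:
\begin{enumerate}[(a)]
\item at $\emptyset$, it puts every singleton in $A(Q)$;
\item at the singletons, it gives $\{a,b\},\{b,c\}\in A(Q)$ and $c\in C^Q_h(\{a,c\})$, so $C^Q_h(\{a,c\})=\{c\}$ by Condition (i);
\item at $\{b,c\}$, it gives $a\in C^Q_h(\{a,b,c\})$.
\end{enumerate}
Hence $A(Q)=A(P_h)$ for every sub-preference. So every sub-preference, including $P_h$ itself, is minimal. None of them is substitutable, since $a\in C^Q_h(\{a,b,c\})$ but $a\notin C^Q_h(\{a,c\})$. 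Thus $P_h$ is not pseudo-substitutable.

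Yet the only acceptable sets with two contracts are $\{a,b\}$ and $\{b,c\}$, and all their subsets are acceptable. So no acceptable set of any minimal sub-preference contains a complementary (or even bi-complementary) pair. Your reduction breaks exactly here. For $P'_h=P_h$ the only violating triple is $(X'',x,x')=(\{a,b,c\},b,a)$, with $Y=\{a,b\}\neq X''$ and $a\in C^P_h(Y\setminus\{b\})$. The violation exists only relative to an unacceptable ambient set and cannot be moved inside $A(P'_h)$.

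The paper's proof does not supply the missing idea either. It passes from non-substitutability of $P'_h$ and $\mathcal{S}_2=\emptyset$ directly to $\mathcal{S}_1\neq\emptyset$, which is the same unproved step, and the example above has $\mathcal{S}_1=\mathcal{S}_2=\emptyset$.

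Otherwise your Steps 2--3 follow the paper's route: assume only bi-complementarities, demote bi-complementary acceptable sets, and contradict minimality via Lemma~\ref{lema de trasitividad}. The paper demotes a single $\overline X\in\mathcal{S}_1$. It uses $\mathcal{S}_2=\emptyset$ only when $\overline X\setminus\{x\}$ is acceptable, to obtain $x\in C^{P'}_h(\overline X\setminus\{x^1\})$.

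Your verification of Condition (ii) is not completed, and its stated justification is wrong. You claim the bi-complementary pair in $W=C^{P'}_h(Z\cup\{y\})$ must involve $y$ because $Z\notin\mathcal{B}$. But complementarity is relative to the ambient set. Under $P'_h:\ \{w_1,w_2,y\},\ \{y\},\ \{w_1,w_2\},\ \{w_1\},\ \{w_2\},\ \emptyset$, which has only bi-complementarities, $w_1$ and $w_2$ are bi-complementary in $W=\{w_1,w_2,y\}$ but not in $Z=\{w_1,w_2\}$.

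In any case, no repair of Step 3 can save the statement. An argument for Theorem~\ref{teorema maximal domain} would have to work directly with substitutability violations whose ambient set is unacceptable, such as $\{a,b,c\}$ above.
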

\begin{proof}
Let $h\in H$, and let $P_h$ be a preference relation that is not pseudo-substitutable. 
Let $P'_h \sqsubseteq P_h$ be minimal. 
To establish the existence of a set $X'\in A(P'_h)$ and two contracts 
$x^{1},x^{2}\in C^{P'}_h(X')$ such that $x^1$ is complementary to $x^2$, 
we first define the following sets:  

$$\mathcal{S}_1=\{X'\in A(P'_h):\exists~ \widetilde{x},x^{\star}\in C^{P'}_h(X') \text{ with }\widetilde{x} \leftrightarrows x^{\star}\}$$
$$\mathcal{S}_2=\{X'\in A(P'_h):\exists~ \widetilde{x},x^{\star}\in C^{P'}_h(X') \text{ with }\widetilde{x}\rightarrow x^{\star}\}$$
Note that if $\mathcal{S}_2\neq\emptyset$, then the result follows. Assume that $\mathcal{S}_2=\emptyset$.  Since $P_h$ is not pseudo-substitutable, $P_h'$ is not a substitutable preference. Hence, $\mathcal{S}_1\neq \emptyset.$ Let $\overline{X}\in \mathcal{S}_1$. Thus, there are $x^1,x^2\in C^{P'}_h(\overline{X})$ such that $x^1\notin C^{P'}_h(\overline{X}\setminus \{x^2\}) $ and $x^2 \notin
C^{P'}_h(\overline{X}\setminus \{x^1\})$.
Since $\overline{X}\in \mathcal{S}_1$, $C^{P'}_h(\overline{X}\setminus \{x^1\})=C^{P'}_h(\overline{X}\setminus \{x^2\})=C^{P'}_h(\overline{X}\setminus \{x^1,x^2\}).$
Next, we define a new preference $P''_h$ and we show that $P''_h$ is a sub-preference of $P_h'$, i.e., $P_h''\sqsubseteq P_h'$, and thus contradicting the minimality of $P_h'.$ Let $P_h''$ defined by:
\begin{enumerate}[(i)]
    \item $C^{P''}_h(\overline{X})=C^{P'}_h(\overline{X}\setminus \{x^1\})=C^{P'}_h(\overline{X}\setminus \{x^2\})=C^{P'}_h(\overline{X}\setminus \{x^1,x^2\}).$
    \item $C^{P''}_h(\widetilde{X})=C^{P'}_h(\widetilde{X})$ for each $\widetilde{X}\neq\overline{X}.$
  \end{enumerate}

   Next, we prove that $P_h''\sqsubseteq P_h'$.  Observe that, by definition of $P_h''$, $A(P_h'')=A(P_h')\setminus C^{P'}_h(\overline{X}),$ implying that $A(P_h'')\subsetneq A(P_h')$. Then, condition (i) of Definition \ref{definicion de SubPref} is fulfilled.
   It remains to prove that for each $\widehat{X}\in A(P'_h)$, and each contract $x$ such that $x_H = h$, the following implication holds:
\begin{equation}\label{ecu 2 lema}
    x \notin \widehat{X} = C^{P''}_h(\widehat{X}) \text{ and } x \in C^{P'}_h(\widehat{X} \cup \{x\}) \text{ imply } x \in C^{P''}_h(\widehat{X} \cup \{x\}).
\end{equation}

Now, there are two cases to analyze, depending on whether $\widehat{X} = \overline{X} \setminus \{x\}$ or not.
 \begin{description}
\item[\textbf{Case 1: }$\boldsymbol{\widehat{X}\neq\overline{X}\setminus\{x\}.}$] Observe that $\widehat{X}\cup\{x\}\neq\overline{X}. $ Assume that \( x \notin \widehat{X} \) and \( x \in C^{P'}_h(\widehat{X} \cup \{x\}) \). By item (ii) in the definition of \( P''_h \), we have \( x \in C^{P''}_h(\widehat{X} \cup \{x\}) \). This proves that  \eqref{ecu 2 lema} holds when $\widehat{X} \neq \overline{X} \setminus \{x\}.$

\item[\textbf{Case 2: }$\boldsymbol{\widehat{X}=\overline{X}\setminus\{x\}.}$]
Assume that $x \notin \widehat{X} = C^{P''}_h(\widehat{X})$ and 
$x \in C^{P'}_h(\widehat{X} \cup \{x\})$. 
We prove that $x \in C^{P''}_h(\widehat{X} \cup \{x\})$.

Suppose otherwise that $x \notin C^{P''}_h(\widehat{X} \cup \{x\})$. 
Since $\widehat{X}=\overline{X}\setminus\{x\}$, it follows that 
$x \notin C^{P''}_h(\overline{X})$. 
By the definition of $P''$, this implies that 
\[
x \notin C^{P'}_h(\overline{X}\setminus\{x^{1}\})
\quad\text{and}\quad
x \notin C^{P'}_h(\overline{X}\setminus\{x^{2}\}).
\]

Since we assume that $\mathcal{S}_2 = \emptyset$, it must also hold that 
\[
x^{1}\notin C^{P'}_h(\overline{X}\setminus\{x\})
\quad\text{and}\quad
x^{2}\notin C^{P'}_h(\overline{X}\setminus\{x\}).
\]

Moreover,
\[
C^{P'}_h(\overline{X}\setminus\{x\})
= C^{P'}_h(\widehat{X})
= \widehat{X}
\]
implies that $x^{1},x^{2}\notin \widehat{X}$, which by the definition of $\widehat{X}$ also implies that 
$x^{1},x^{2}\notin \overline{X}\setminus\{x\}$.

This is a contradiction, since $x^{1},x^{2}\in\overline{X}$ and $x^{1}\neq x^{2}$.

Therefore, $x\in C^{P''}_h(\widehat{X}\cup\{x\})$, which proves that 
\eqref{ecu 2 lema} holds when $\widehat{X} = \overline{X} \setminus \{x\}$.
 
\end{description}
Considering both cases, we conclude that $P_h''$ is a sub-preference of $P_h'$, i.e., $P_h'' \sqsubseteq P_h'$, and by Lemma \ref{lema de trasitividad}, we have that $P_h'' \sqsubseteq P_h$. Thus, contradicting the minimality of $P_h'$. As this contradiction stems from the assumption that $\mathcal{S}_2 = \emptyset$, we deduce that $\mathcal{S}_2 \neq \emptyset$, thereby completing the proof of the lemma.
\end{proof}

The goal of the following proof is to show that, given a hospital whose preference relation is not pseudo-substitutable, we can construct a market with another hospital with a pseudo-substitutable preference relation, together with a set of doctors whose preferences are also pseudo-substitutable, such that the set of stable allocations is empty. 

As will become clear, the preference relation of the second hospital is linear---a special case of pseudo-substitutable (and therefore pseudo-substitutable) preferences---and the doctors’ preferences are linear as well, again representing a special case of pseudo-substitutable preferences. 

This construction defines a many-to-one market, which can be viewed as a special case of a many-to-many market.
\medskip

\noindent \begin{proof}[Proof of Theorem \ref{teorema maximal domain}]
    Since $P_h$ si not pseudo-substitutable preferences then, by Lemma \ref{P_h no pseudo, entonces unicomplementaria}, there is $P'_h\sqsubseteq P_h$ minimal that is not a substitutable sub-preference. Then, there is a set of contracts $X'\subseteq X$ and two contracts involving hospital $h$ such that $x^1,x^2\in X'=C^{P'}_h(X')$ such that $x^1\to x^2$. That is, 
    \begin{equation}\label{ecu1 teorema maximal domain}
        x^1\in C^{P'}_h(X'\setminus \{x^2\}) \text{ and } x^2\notin C^{P'}_h(X'\setminus \{x^1\}).
    \end{equation}
W.l.o.g. assume that $X'$ is minimal in the sense that there is no $X''\subsetneq X'$ fulfilling Condition \eqref{ecu1 teorema maximal domain}.

There are three cases to consider:
\begin{description}
    \item[\textbf{Case 1: There is only one pair of contracts fulfilling Condition \eqref{ecu1 teorema maximal domain}}. ] 
        Define $\overline{X}=X'\setminus \{x^1,x^2\}$. Then, Condition \eqref{ecu1 teorema maximal domain} is equivalent to  
 \begin{equation}\label{ecu2 teorema maximal domain}
        x^1\in C^{P'}_h(\overline{X} \cup \{x^1\}) \text{ and } x^2\notin C^{P'}_h(\overline{X} \cup \{x^2\}).
    \end{equation}
Now, consider the following preference $P'_h$ satisfying Condition \eqref{ecu2 teorema maximal domain}:

$$P'_h: \ldots, \overline{X}\cup \{x^1,x^2\}, \overline{X}\cup \{x^1\},\overline{X},\ldots$$

Define also the following preferences for the rest of the agents:
\begin{align*}
    &P'_{h'}:y^2,y^1,\emptyset&\\
    &P'_{d^1}:y^1,x^1,\emptyset&\\
    &P'_{d^2}:x^2,y^2,\emptyset&\\
    &P'_{d}:z,\emptyset&\text{ for each } d\in \overline{X}_D \text{ and } z\in \overline{X}\\
    &P'_{d}:\emptyset&\text{ for each } d\in D \setminus (\overline{X}_D\cup \{d^1,d^2\})
\end{align*}

Note that, in order to prove that there is no stable allocation under this preference profile, we first prove the following claim:

\begin{claim}\label{Claim del teorema}
For each $\widetilde{X}\subsetneq \overline{X}$ and each $z\in \overline{X}\setminus\widetilde{X} $, we have that $z\in C^{P'}_h(\widetilde{X}\cup \{z\}).$    
\end{claim}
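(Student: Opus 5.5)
The plan is to argue by contradiction, using the minimality of $X'$, the uniqueness hypothesis of Case 1, and the minimality of the sub-preference $P'_h$. Suppose there are $\widetilde{X}\subsetneq \overline{X}$ and $z\in \overline{X}\setminus \widetilde{X}$ with $z\notin C^{P'}_h(\widetilde{X}\cup\{z\})$. Recall that $X'=\overline{X}\cup\{x^1,x^2\}$ is acceptable under $P'_h$. Hence $z\in C^{P'}_h(X')$, so the contract $z$ is chosen from the large set $X'$ but rejected from the small set $\widetilde{X}\cup\{z\}$.

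\emph{Step 1: locate a minimal acceptable set where $z$ is still chosen.} Among all sets $T$ with $\widetilde{X}\cup\{z\}\subseteq T\subseteq X'$ and $z\in C^{P'}_h(T)$, choose one that is minimal under inclusion. Such a $T$ exists because $X'$ qualifies, and $T\neq \widetilde{X}\cup\{z\}$ by the contradiction hypothesis. By consistency, $C^{P'}_h(T)$ is itself a set in this family. Minimality therefore gives $C^{P'}_h(T)=T$, so $T\in A(P'_h)$.

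\emph{Step 2: extract a complementary or bi-complementary pair.} Pick $w\in T\setminus(\widetilde{X}\cup\{z\})$. By minimality of $T$, we have $z\notin C^{P'}_h(T\setminus\{w\})$, while $w,z\in T=C^{P'}_h(T)$. There are two possibilities.
\begin{enumerate}[(a)]
\item If $w\in C^{P'}_h(T\setminus\{z\})$, then $w\to z$ inside the acceptable set $T\subseteq X'$, which is a pair of the type in Condition \eqref{ecu1 teorema maximal domain}.
\begin{itemize}
\item If $T\subsetneq X'$, this contradicts the choice of $X'$ as minimal.
\item If $T=X'$, this is a second pair fulfilling \eqref{ecu1 teorema maximal domain}. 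It differs from $(x^1,x^2)$ because $z\in\overline{X}$, so it contradicts the Case 1 hypothesis.
\end{itemize}
\item Otherwise $w\leftrightarrows z$ in $T$, and this case needs a separate argument (Step 3).
\end{enumerate}

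\emph{Step 3 (main obstacle): rule out the bi-complementary case.} Here I would mimic the proof of Lemma \ref{P_h no pseudo, entonces unicomplementaria}. Define $P''_h$ by setting $C^{P''}_h(T)=C^{P'}_h(T\setminus\{w,z\})$ and leaving all other choices unchanged. This removes $T$ from the acceptable sets, so $A(P''_h)\subsetneq A(P'_h)$, and condition (i) of Definition \ref{definicion de SubPref} is automatic.

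For condition (ii), the only delicate sets are $\widehat{X}=T\setminus\{x\}$. In the lemma this case was handled using $\mathcal{S}_2=\emptyset$. Here that hypothesis is replaced by the absence of one-directional complementarities inside proper subsets of $X'$ (minimality of $X'$), together with the Case 1 uniqueness when $T=X'$. With that replacement, the same contradiction (both $w$ and $z$ would have to lie outside $\widehat{X}$) goes through. Then Lemma \ref{lema de trasitividad} gives $P''_h\sqsubseteq P_h$, contradicting the minimality of $P'_h$.

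I expect this adaptation to be the hardest step. The lemma's argument needed the global emptiness of $\mathcal{S}_2$, whereas here we only have local control inside $X'$. If the local control turns out to be insufficient, I would refine the choice of $T$ and $w$, for instance taking $w$ so that $T\setminus\{w\}$ is as large as possible in $A(P'_h)$.
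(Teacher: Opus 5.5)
\textbf{Overall.} Your route is the paper's. Its proof is a two-line sketch: minimality yields a second failing contract, and the construction of Lemma~\ref{P_h no pseudo, entonces unicomplementaria} then contradicts minimality of $P'_h$. You fill this in much more explicitly: one-directional pairs are excluded by minimality of $X'$ and the Case~1 hypothesis, and bi-complementary pairs by the Lemma's construction. However, Step~1 contains an unjustified inference.

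\textbf{The gap in Step~1.} For $C^{P'}_h(T)$ to lie in your family you need $\widetilde{X}\subseteq C^{P'}_h(T)$. Consistency does not give this; it only gives $z\in C^{P'}_h(C^{P'}_h(T))$. Minimality of $T$, applied to $C^{P'}_h(T)\cup\widetilde{X}\cup\{z\}$, only yields $T\setminus C^{P'}_h(T)\subseteq\widetilde{X}$. So $T$ may reject elements of $\widetilde{X}$ and need not be acceptable. The rest of your argument then has nothing to act on:
\begin{itemize}
\item Step~2 cannot invoke minimality of $X'$, which concerns acceptable sets.
\item Step~3 has no acceptable set to delete.
\item Passing to $C^{P'}_h(T)$ does not help: without substitutability, $z\notin C^{P'}_h(T\setminus\{w\})$ says nothing about $C^{P'}_h(C^{P'}_h(T)\setminus\{w\})$.
\end{itemize}

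\textbf{The repair.} Use instead the family of $T$ with $\widetilde{X}\cup\{z\}\subseteq T\subseteq X'$ and $\widetilde{X}\cup\{z\}\subseteq C^{P'}_h(T)$.
\begin{itemize}
\item $X'$ belongs to it, and consistency now does put $C^{P'}_h(T)$ in it. So a minimal $T$ is acceptable, and $T\neq\widetilde{X}\cup\{z\}$.
\item For $w\in T\setminus(\widetilde{X}\cup\{z\})$, minimality gives some $u\in\widetilde{X}\cup\{z\}$ with $u\notin C^{P'}_h(T\setminus\{w\})$.
\item Run Steps~2--3 with $u$ in place of $z$. Since $u\in\overline{X}$, we have $u\neq x^1,x^2$, which is all you used about $z$.
\end{itemize}

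\textbf{Step~3 does close.} Your hedge is unnecessary. Suppose (ii) fails at $\widehat{X}=T\setminus\{x\}$. Then:
\begin{itemize}
\item $x\neq w,u$, since $T\setminus\{w\}$ and $T\setminus\{u\}$ are not acceptable.
\item $T\setminus\{x\}$ is acceptable, so $u\in C^{P'}_h(T\setminus\{x\})$.
\item $x\notin C^{P'}_h(T\setminus\{w,u\})=C^{P'}_h(T\setminus\{u\})$.
\end{itemize}
This says $u\to x$ in $T$. If $T\subsetneq X'$, that contradicts minimality of $X'$. If $T=X'$, it contradicts Case~1, because $u\neq x^1$. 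It is this pair that does the work. When $T=X'$ one can have $w=x^1$ and $x=x^2$, so your claim that \emph{both} $w$ and $z$ fall outside $\widehat{X}$ is too strong.

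\textbf{Inherited caveat.} Like the paper's Lemma, you alter the choice function only at $T$. By consistency, this is induced by a preference only if no $Y\supsetneq T$ has $C^{P'}_h(Y)=T$. That weakness comes from the paper, not from you.
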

Assume not, i.e., $z\notin C^{P'}_h(\widetilde{X}\cup \{z\}).$ By the assumption of the minimality of $\overline{X}$, there is  a contract $z'\in\overline{X}\setminus \widetilde{X}$ such that $z'\notin C^{P'}_h(\widetilde{X}\cup \{z'\}).$ Then, following the same arguments as in the proof of Lemma \ref{P_h no pseudo, entonces unicomplementaria}, it is possible to construct a sub-preference of $P'$ violating the minimality of $P'.$ Then, $z\in C^{P'}_h(\widetilde{X}\cup \{z\}),$ proving the Claim.

Now, we can analyze the possible individually rational allocations and verify that there is no one stable.
\begin{center}
    \begin{tabular}{|c|c|c|}\hline
 $Y_h$    & $Y_{h'}$ & Blocking contract\\  \hline 
 $\overline{X}\cup \{x^1,x^2\}$& $\emptyset$  &  $y^1$ \\
 $\overline{X}\cup \{x^1\}$& $y^2$  &  $x^2$ \\
  $\overline{X}\cup \{x^1\}$& $\emptyset$  &  $x^2$ \\
 $\overline{X}$&  $y^2$ &  $x^1$ \\
 $\overline{X}$& $y^1$  & $y^2$  \\
 $\overline{X}$& $\emptyset$  & $y^2$  \\
 $\widetilde{X}\cup A$ & $B$  &  $z$ \\\hline
%  $S'\cup \{x^1\}$ &  $x^1$ &  $(h,d)$ \\
%   $S'\cup \{x^2\}$ &  $x^2$ &  $(h,d)$ \\ 
\end{tabular}
\end{center}
 where $\widetilde{X}$ is such that $\widetilde{X}\subsetneq\overline{X}$, $A\in2^{\{x^1,x^2\}}$, $B\in\{\emptyset,y^1,y^2\}$ and, by the Claim, $z\in C^{P'}_h(\widetilde{X}\cup \{z\})$ together with the fact that $P'_{d}:z,\emptyset$ for each  $d\in \overline{X}_D\setminus \widetilde{X}_D$.\footnote{It is important to note that the sets \( A \) and \( B \) must be chosen in such a way that Y remains an allocation.}

\item[\textbf{Case 2: There are two or more pairs of contracts fulfilling Condition \eqref{ecu1 teorema maximal domain}.}] 
      We analyze the case with two pairs of complementary contracts, that is, $x^1\to x^2$ and $x^3\to x^4$. For more than two pairs of contracts, the proof is analogous. Define $\overline{X} = X' \setminus \{x^1, x^2, x^3, x^4\}$. Then, Condition~\eqref{ecu1 teorema maximal domain} applied to these pairs yields:
 
 \begin{equation}\label{ecu3 teorema maximal domain}
 \begin{split}
        x^1\in C^{P'}_h(\overline{X} \cup \{x^1,x^3,x^4\}) \text{ and } x^2\notin C^{P'}_h(\overline{X} \cup \{x^2,x^3,x^4\}),\text{ and}\\
        x^3\in C^{P'}_h(\overline{X} \cup \{x^1,x^2,x^3\}) \text{ and } x^4\notin C^{P'}_h(\overline{X} \cup \{x^1,x^2,x^4\}).
        \end{split}
    \end{equation}
Now, consider the following preference $P'_h$ satisfying Condition \eqref{ecu3 teorema maximal domain}:

$$P'_h: \ldots, \overline{X}\cup \{x^1,x^2,x^3,x^4\},  \overline{X}\cup \{x^1,x^3,x^4\},  \overline{X}\cup \{x^1,x^2,x^3\},\overline{X}\cup \{x^1,x^4\}, \overline{X}\cup \{x^3,x^4\},$$ $$\overline{X}\cup \{x^1,x^2\}, \overline{X}\cup \{x^2,x^3\}, \overline{X}\cup \{x^1,x^3\}, \overline{X}\cup \{x^1\}, \overline{X}\cup \{x^2\}, \overline{X}\cup \{x^3\}, \overline{X}\cup \{x^4\}, \overline{X},\ldots$$

Define also the following preferences for the rest of the agents:
\begin{align*}
    &P'_{h'}:y^2,y^4,y^3,y^1\emptyset&\\
    &P'_{d^1}:y^1,x^1,\emptyset&\\
    &P'_{d^2}:x^2,y^2,\emptyset&\\
    &P'_{d^3}:y^3,x^3,\emptyset&\\
    &P'_{d^4}:x^4,y^4,\emptyset&\\
    &P'_{d}:z,\emptyset&\text{ for each } d\in \overline{X}_D \text{ and } z\in \overline{X}\\
    &P'_{d}:\emptyset&\text{ for each } d\in D \setminus (\overline{X}_D\cup \{d^1,d^2,d^3,d^4\})
\end{align*}

Now, we can analyze the possible individually rational allocations and verify that there is no one stable.
\begin{center}
    \begin{tabular}{|c|c|c|}\hline
 $Y_h$    & $Y_{h'}$ & Blocking contract\\  \hline 
 $\overline{X}\cup \{x^1,x^2,x^3,x^4\}$& $\emptyset$  &  $y^1$ \\
 $\overline{X}\cup \{x^1,x^3,x^4\}$& $y^2$  &  $x^2$ \\
  $\overline{X}\cup \{x^1,x^3,x^4\}$& $\emptyset$  &  $x^2$ \\
 $\overline{X}\cup \{x^1,x^2,x^3\}$& $y^4$  &  $x^4$ \\
  $\overline{X}\cup \{x^1,x^2,x^3\}$& $\emptyset$  &  $x^4$ \\
   $\overline{X}\cup \{x^1,x^3\}$& $y^2$  &  $x^2$ \\
    $\overline{X}\cup \{x^1,x^3\}$& $y^4$  &  $x^2$ \\
     $\overline{X}\cup \{x^1,x^3\}$& $\emptyset$  &  $x^2$ \\
     $\overline{X}\cup \{x^1,x^4\}$& $y^3$  &  $y^2$ \\
      $\overline{X}\cup \{x^1,x^4\}$& $y^2$  &  $x^3$ \\
      $\overline{X}\cup \{x^1,x^4\}$& $\emptyset$  &  $x^3$ \\
       $\overline{X}\cup \{x^3,x^4\}$& $y^1$  &  $y^2$ \\
        $\overline{X}\cup \{x^3,x^4\}$& $y^2$  &  $x^1$ \\
          $\overline{X}\cup \{x^3,x^4\}$& $\emptyset$  &  $x^1$ \\
         $\overline{X}\cup \{x^1,x^2\}$& $y^3$  &  $y^4$ \\
 $\overline{X}\cup \{x^1,x^2\}$& $y^4$  &  $x^3$ \\
  $\overline{X}\cup \{x^1,x^2\}$& $\emptyset$  &  $x^3$ \\
  $\overline{X}\cup \{x^2,x^3\}$& $y^1$  &  $y^4$ \\
   $\overline{X}\cup \{x^2,x^3\}$& $y^4$  &  $x^1$ \\
    $\overline{X}\cup \{x^2,x^3\}$& $\emptyset$  &  $x^1$ \\
%$\overline{X}\cup \{x^2,x^4\}$& $x^1$  &  $(h',x^3)$ \\
 %  $\overline{X}\cup \{x^2,x^4\}$& $x^3$  &  $(h,x^1)$ \\
   $\overline{X}\cup \{x^1\}$& $x^2$  &  $x^3$ \\
   $\overline{X}\cup \{x^1\}$& $x^3$  &  $x^2$ \\
   $\overline{X}\cup \{x^1\}$& $x^4$  &  $x^2$ \\
   $\overline{X}\cup \{x^1\}$& $\emptyset$  &  $x^2$ \\
   $\overline{X}\cup \{x^2\}$& $x^1$  &  $x^3$ \\
   $\overline{X}\cup \{x^2\}$& $x^3$  &  $x^1$ \\
   $\overline{X}\cup \{x^2\}$& $x^4$  &  $x^1$ \\
     $\overline{X}\cup \{x^2\}$& $\emptyset$  &  $x^1$ \\
   $\overline{X}\cup \{x^3\}$& $B$  &  $x^2$ \\
   $\overline{X}\cup \{x^4\}$& $B$  &  $x^2$ \\
            $\overline{X}$& $B$  &  $x^2$ \\
               
 %$\overline{X}$&  $x^2$ &  $(h,x^1)$ \\
 %$\overline{X}$& $x^1$  & $(h',x^2)$  \\
 $\widetilde{X}\cup A$ & $B$  &  $z$ \\
   \hline
\end{tabular}
\end{center}
where $\widetilde{X}$ is such that $\widetilde{X}\subsetneq\overline{X}$, $A\in2^{\{x^1,x^2,x^3,x^4\}}$, $B\in\{\emptyset,y^1,y^2,y^3,y^4\}$ and, by the Claim, $z\in C^{P'}_h(\widetilde{X}\cup \{z\})$ together with the fact that $P'_{d}:z,\emptyset$ for each  $d\in \overline{X}_D\setminus\widetilde{X}_D$.\footnote{It is important to note that the sets \( A \) and \( B \) must be chosen in such a way that \( Y \) remains an allocation.}
\item[\textbf{Case 3: There are two or more overlapping pairs of contracts fulfilling Condition \eqref{ecu1 teorema maximal domain}:}]

     We analyze the case of two overlapping pairs of complementary contracts.  Define $\overline{X} =X' \setminus \{x^1, x^2, x^3\}$.\footnote {For more than two pairs, the proof is analogous.} Now, there are four possible cases to consider.
   \begin{description}
       \item[$\boldsymbol{( x^1\to x^2\text{ and }x^2 \to x^3):}$] 
      The application of Condition~\eqref{ecu1 teorema maximal domain} to these pairs of complementary contracts yields the following:
 
 \begin{equation}\label{ecu4 teorema maximal domain}
 \begin{split}
        x^1\in C^{P'}_h(\overline{X} \cup \{x^1,x^3\}) \text{ and } x^2\notin C^{P'}_h(\overline{X} \cup \{x^2,x^3\}),\text{ and}\\
        x^2\in C^{P'}_h(\overline{X} \cup \{x^1,x^2\}) \text{ and } x^3\notin C^{P'}_h(\overline{X} \cup \{x^1,x^3\}).
        \end{split}
    \end{equation}
Now, consider the following preference $P'_h$ satisfying the Condition \eqref{ecu4 teorema maximal domain}:

$$P'_h: \ldots, \overline{X}\cup \{x^1,x^2,x^3\},  \overline{X}\cup \{x^1,x^3\},  \overline{X}\cup \{x^1,x^2\},\overline{X}\cup \{x^1\}, \overline{X}\cup \{x^2\}, \overline{X}\cup \{x^3\},  \overline{X},\ldots$$

Define also the following preferences for the rest of the agents:
\begin{align*}
    &P'_{h'}:y^3,y^2,y^1,\emptyset&\\
    &P'_{d^1}:y^1,x^1,\emptyset&\\
    &P'_{d^2}:x^2,y^2,\emptyset&\\
    &P'_{d^3}:x^3,y^3,\emptyset&\\
       &P'_{d}:z,\emptyset&\text{ for each } d\in \overline{X}_D \text{ and } z\in \overline{X}\\
    &P'_{d}:\emptyset&\text{ for each } d\in D \setminus (\overline{X}_D\cup \{d^1,d^2,d^3\})
\end{align*}

Now, we can analyze the possible individually rational allocations and verify that there is no one stable.
\begin{center}
    \begin{tabular}{|c|c|c|}\hline
 $Y_h$    & $Y_{h'}$ & Blocking contract\\  \hline 
 $\overline{X}\cup \{x^1,x^2,x^3\}$& $\emptyset$  &  $y^1$ \\
 $\overline{X}\cup \{x^1,x^2\}$& $y^3$  &  $x^3$ \\
  $\overline{X}\cup \{x^1,x^2\}$& $\emptyset$  &  $x^3$ \\
 $\overline{X}\cup \{x^1,x^3\}$& $y^2$  &  $x^2$ \\
  $\overline{X}\cup \{x^1,x^3\}$& $\emptyset$  &  $x^2$ \\
 
    $\overline{X}\cup \{x^1\}$& $y^2$  &  $x^2$ \\
     $\overline{X}\cup \{x^1\}$& $y^3$  &  $x^2$ \\
       $\overline{X}\cup \{x^1\}$& $\emptyset$  &  $y^2$ \\
    
       $\overline{X}\cup \{x^2\}$& $y^1$  &  $y^3$ \\
        $\overline{X}\cup \{x^2\}$& $y^3$  &  $x^1$ \\
          $\overline{X}\cup \{x^2\}$& $\emptyset$  &  $y^3$ \\
         
 $\overline{X}\cup \{x^3\}$& $y^1$  &  $y^2$ \\
  $\overline{X}\cup \{x^3\}$& $y^2$  &  $x^1$ \\
  $\overline{X}\cup \{x^3\}$& $\emptyset$  &  $y^1$ \\
            $\overline{X}$& $B$  &  $x^2$ \\
               
 %$\overline{X}$&  $x^2$ &  $(h,x^1)$ \\
 %$\overline{X}$& $x^1$  & $(h',x^2)$  \\
 $\widetilde{X}\cup A$ & $B$  &  $z$ \\
   \hline
\end{tabular}
\end{center}
where $\widetilde{X}$ is such that $\widetilde{X}\subsetneq\overline{X}$, $A\in2^{\{x^1,x^2,x^3\}}$, $B\in\{\emptyset,y^1,y^2,y^3\}$ and, by the Claim, $z\in C^{P'}_h(\widetilde{X}\cup \{z\})$ together with the fact that $P'_{d}:z,\emptyset$ for each  $d\in \overline{X}_D\setminus \widetilde{X}_D$.\footnote{It is important to note that the sets \( A \) and \( B \) must be chosen in such a way that \( Y \) remains an allocation.}

\item[$\boldsymbol{( x^1\to x^2\text{ and }x^3 \to x^2):}$] 
      The application of Condition~\eqref{ecu1 teorema maximal domain} to these pairs of complementary contracts yields the following:

 \begin{equation}\label{ecu5 teorema maximal domain}
 \begin{split}
        x^1\in C^{P'}_h(\overline{X} \cup \{x^1,x^3\}) \text{ and } x^2\notin C^{P'}_h(\overline{X} \cup \{x^2,x^3\}),\text{ and}\\
        x^3\in C^{P'}_h(\overline{X} \cup \{x^1x^3\}) \text{ and } x^2\notin C^{P'}_h(\overline{X} \cup \{x^1,x^2\}).
        \end{split}
    \end{equation}
Now, consider the following preference $P'_h$ satisfying the Condition \eqref{ecu5 teorema maximal domain}:

$$P'_h: \ldots, \overline{X}\cup \{x^1,x^2,x^3\},  \overline{X}\cup \{x^1,x^3\},  \overline{X}\cup \{x^1\}, \overline{X}\cup \{x^3\},  \overline{X},\ldots$$

Define also the following preferences for the rest of the agents:
\begin{align*}
    &P'_{h'}:y^2,y^3,y^1,\emptyset&\\
    &P'_{d^1}:y^1,x^1,\emptyset&\\
    &P'_{d^2}:x^2,y^2,\emptyset&\\
    &P'_{d^3}:y^3,x^3,\emptyset&\\
       &P'_{d}:z,\emptyset&\text{ for each } d\in \overline{X}_D \text{ and } z\in \overline{X}\\
    &P'_{d}:\emptyset&\text{ for each } d\in D \setminus (\overline{X}_D\cup \{d^1,d^2,d^3\})
\end{align*}

Now, we can analyze the possible individually rational allocations and verify that there is no one stable.
\begin{center}
    \begin{tabular}{|c|c|c|}\hline
 $Y_h$    & $y_{h'}$ & Blocking contract\\  \hline 
 $\overline{X}\cup \{x^1,x^2,x^3\}$& $\emptyset$  &  $y^1$ \\

 $\overline{X}\cup \{x^1,x^3\}$& $y^2$  &  $x^2$ \\
  $\overline{X}\cup \{x^1,x^3\}$& $\emptyset$  &  $x^2$ \\

    $\overline{X}\cup \{x^1\}$& $y^2$  &  $x^3$ \\
     $\overline{X}\cup \{x^1\}$& $y^3$  &  $y^2$ \\
         $\overline{X}\cup \{x^1\}$& $\emptyset$  &  $x^2$ \\
       
 $\overline{X}\cup \{x^3\}$& $y^1$  &  $y^2$ \\
  $\overline{X}\cup \{x^3\}$& $y^2$  &  $x^1$ \\
    $\overline{X}\cup \{x^3\}$& $\emptyset$  &  $x^1$ \\
            $\overline{X}$& $B$  &  $y^2$ \\
               
 %$\overline{X}$&  $x^2$ &  $(h,x^1)$ \\
 %$\overline{X}$& $x^1$  & $(h',x^2)$  \\
 $\widetilde{X}\cup A$ & $B$  &  $z$ \\
   \hline
\end{tabular}
\end{center}
where $\widetilde{X}$ is such that $\widetilde{X}\subsetneq\overline{X}$, $A\in2^{\{x^1,x^2,x^3\}}$, $B\in\{\emptyset,y^1,y^2,y^3\}$ and, by the Claim, $z\in C^{P'}_h(\widetilde{X}\cup \{z\})$ together with the fact that $P'_{d}:z,\emptyset$ for each  $d\in \overline{X}_D\setminus\widetilde{X}_D$.\footnote{It is important to note that the sets \( A \) and \( B \) must be chosen in such a way that \( Y \) remains an allocation.}

\item[$\boldsymbol{( x^1\to x^2\text{ and }x^1 \to x^3):}$] 
      The application of Condition~\eqref{ecu1 teorema maximal domain} to these pairs yields the following:

 \begin{equation}\label{ecu6 teorema maximal domain}
 \begin{split}
        x^1\in C^{P'}_h(\overline{X} \cup \{x^1,x^3\}) \text{ and } x^2\notin C^{P'}_h(\overline{X} \cup \{x^2,x^3\}),\text{ and}\\
        x^1\in C^{P'}_h(\overline{X} \cup \{x^1,x^2\}) \text{ and } x^3\notin C^{P'}_h(\overline{X} \cup \{x^2,x^3\}).
        \end{split}
    \end{equation}
Now, consider the following preference $P'_h$ satisfying the Condition \eqref{ecu6 teorema maximal domain}:

$$P'_h: \ldots, \overline{X}\cup \{x^1,x^2,x^3\},  \overline{X}\cup \{x^1,x^2\}, \overline{X}\cup \{x^1,x^3\},  \overline{X}\cup \{x^1\}, \overline{X}\cup \{x^2\},\overline{X}\cup \{x^3\},  \overline{X},\ldots$$

Define also the following preferences for the rest of the agents:
\begin{align*}
    &P'_{h'}:y^2,y^3,y^1,\emptyset&\\
    &P'_{d^1}:y^1,x^1,\emptyset&\\
    &P'_{d^2}:x^2,y^2,\emptyset&\\
    &P'_{d^3}:x^3,y^3,\emptyset&\\
       &P'_{d}:z,\emptyset&\text{ for each } d\in \overline{X}_D \text{ and } z\in \overline{X}\\
    &P'_{d}:\emptyset&\text{ for each } d\in D \setminus (\overline{X}_D\cup \{d^1,d^2,d^3\})
\end{align*}

Now, we can analyze the possible individually rational allocations and verify that there is no one stable.
\begin{center}
    \begin{tabular}{|c|c|c|}\hline
 $Y_h$    & $Y_{h'}$ & Blocking contract\\  \hline 
 $\overline{X}\cup \{x^1,x^2,x^3\}$& $\emptyset$  &  $y^2$ \\

 $\overline{X}\cup \{x^1,x^2\}$& $y^3$  &  $x^3$ \\
  $\overline{X}\cup \{x^1,x^2\}$& $\emptyset$  &  $x^3$ \\

 $\overline{X}\cup \{x^1,x^3\}$& $y^2$  &  $x^2$ \\
  $\overline{X}\cup \{x^1,x^3\}$& $\emptyset$  &  $x^2$ \\
 
    $\overline{X}\cup \{x^1\}$& $y^2$  &  $x^2$ \\
     $\overline{X}\cup \{x^1\}$& $y^3$  &  $x^2$ \\
       $\overline{X}\cup \{x^1\}$& $\emptyset$  &  $x^2$ \\
    
    $\overline{X}\cup \{x^2\}$& $y^1$  &  $y^3$ \\
     $\overline{X}\cup \{x^2\}$& $y^3$  &  $x^1$ \\
       $\overline{X}\cup \{x^2\}$& $\emptyset$  &  $x^1$ \\
    
 $\overline{X}\cup \{x^3\}$& $y^1$  &  $y^2$ \\
  $\overline{X}\cup \{x^3\}$& $y^2$  &  $x^1$ \\
       $\overline{X}\cup \{x^3\}$& $\emptyset$  &  $y^1$ \\
            $\overline{X}$& $B$  &  $y^1$ \\
               
 %$\overline{X}$&  $x^2$ &  $(h,x^1)$ \\
 %$\overline{X}$& $x^1$  & $(h',x^2)$  \\
 $\widetilde{X}\cup A$ & $B$  &  $z$ \\
   \hline
\end{tabular}
\end{center}
where $\widetilde{X}$ is such that $\widetilde{X}\subsetneq\overline{X}$, $A\in2^{\{x^1,x^2,x^3\}}$, $B\in\{\emptyset,y^1,y^2,y^3\}$ and, by the Claim, $z\in C^{P'}_h(\widetilde{X}\cup \{z\})$ together with the fact that $P'_{d}:z,\emptyset$ for each  $d\in \overline{X}_D\setminus\widetilde{X}_D$.\footnote{It is important to note that the sets \( A \) and \( B \) must be chosen in such a way that \( Y \) remains an allocation.}

\item[$\boldsymbol{( x^1\to x^2,~x^2\to x^3\text{ and }x^3 \to x^1):}$] 
      The application of Condition~\eqref{ecu1 teorema maximal domain} to these pairs of complementary contracts yields the following:
  
 \begin{equation}\label{ecu7 teorema maximal domain}
 \begin{split}
        x^1\in C^{P'}_h(\overline{X} \cup \{x^1,x^3\}) \text{ and } x^2\notin C^{P'}_h(\overline{X} \cup \{x^2,x^3\}),\\
        x^2\in C^{P'}_h(\overline{X} \cup \{x^1,x^2\}) \text{ and } x^3\notin C^{P'}_h(\overline{X} \cup \{x^1,x^3\}), \text{ and}\\
        x^3\in C^{P'}_h(\overline{X} \cup \{x^2,x^3\}) \text{ and } x^1\notin C^{P'}_h(\overline{X} \cup \{x^1,x^2\}).
        \end{split}
    \end{equation}
Consider, w.l.o.g., the following preference $P'_h$ that satisfies Condition~\eqref{ecu7 teorema maximal domain}:

$$P'_h: \ldots, \overline{X}\cup \{x^1,x^2,x^3\},  \overline{X}\cup \{x^3\}, \overline{X}\cup \{x^1\}, \overline{X}\cup \{x^2\},  \overline{X},\ldots$$

Recall that $X'=\overline{X}\cup \{x^1,x^2,x^3\}$ is the minimal set of contracts fulfilling Condition \eqref{ecu1 teorema maximal domain}. Observe that $x^1 \notin C^{P'}_h(X' \setminus \{x^2\})$ and $x^2 \notin C^{P'}_h(X' \setminus \{x^1\})$. Then, following the same arguments as in the proof of Lemma~\ref{P_h no pseudo, entonces unicomplementaria}, we can construct a sub-preference of $P'_h$, which contradicts the minimality of $P'_h$. Therefore, this case cannot occur.

   \end{description}  
   
\end{description}
Based on the analysis of the three cases, given a preference relation $P'_h$ that is not pseudo-substitutable, it is possible to construct a preference profile for the remaining agents---all of which are pseudo-substitutable---such that the set of stable allocations is empty, i.e., $S(P) = \emptyset$.
\end{proof}

\section{Appendix: Proof of Propositions \ref{proposicion de propiedad 1} and \ref{proposicion de propiedad 2}}\label{Apendice puebas de proposiciones}

\begin{proof}[Proof of Proposition \ref{proposicion de propiedad 1}]%To simplify the proof, we verify the minimality property for a preference $P_h$. The argument for doctors' preferences is analogous. 
Let $a \in D \cup H$, and let $P'_a$ and $P_a$ be preference relations such that $P'_{a}\sqsubseteq P_{a}$ and  $P'_{a}$ is substitutable.
Assume that $P_a'$  is not minimal. Then, there is $P''_a\sqsubseteq P_{a}$ such that $A(P''_a)\subsetneq A(P'_a)$. Let $X'\subseteq X$ be such that $X'\in A(P'_a)\setminus A(P''_a)$, and let $\overline{X}=C^{P''}_a(X')$. So, $\overline{X}\subsetneq X'$ implying that
$\overline{X}\in A(P''_a)\subsetneq A(P'_a) \subseteq A(P_a).$  
Let $x\in X'\setminus \overline{X}.$ Then, 
\begin{equation}\label{ecu 2 prueba minimal}
    \overline{X}\subseteq \overline{X} \cup \{x\} \subseteq X'.
\end{equation}
Since $P_a'$ is a substitutable preference and $C^{P'}_a(X')=X'$, it follows that $C^{P'}_a(\overline{X} \cup \{x\})=\overline{X} \cup \{x\}.$ Hence, $ \overline{X} \cup \{x\}\in A(P'_a)\subseteq A(P_a)$ which implies that    
\begin{equation}\label{ecu 3 prueba minimal}
    C^{P}_a(\overline{X} \cup \{x\})=\overline{X}\cup \{x\}.
\end{equation}
The fact that $\overline{X}=C^{P''}_a(X')$ and together with  \eqref{ecu 2 prueba minimal} imply that
\begin{equation}\label{ecu 4 prueba minimal}
    C^{P''}_a(\overline{X} \cup \{x\})=\overline{X}.
\end{equation}
Note that equation \eqref{ecu 3 prueba minimal} implies that $x\in C^P_a(\overline{X} \cup \{x\})$ while equation \eqref{ecu 4 prueba minimal} implies that $x\notin C^{P''}_a(\overline{X} \cup \{x\})$. This contradicts the assumption that $P''_a\sqsubseteq P_{a}$. Therefore, $P'_a$ is minimal.\end{proof}

\noindent\begin{proof}[Proof of Proposition \ref{proposicion de propiedad 2}]
  \begin{description}
\item[\textbf{1. $\boldsymbol{A(P'_a)=A(P''_a)}.$}] 
Recall that since $P'_a$ and $P''_a$ are sub-preference relations of $P_a$, we have
\begin{equation}\label{ecu1 propo propiedades de subpreferencias}
    A(P'_a)\subseteq A(P_a)   \text{ and }   A(P''_a)\subseteq A(P_a).
\end{equation}

Assume, for the sake of contradiction, that $A(P'_a)\neq A(P''_a)$. Then, there exists a set of contracts $X_1$ such that
\[
X_1 = C_a^{P'}(X_1)   \text{ and }   X_1 \neq C_a^{P''}(X_1).
\]
Let $X_2 = C_a^{P''}(X_1)$. By definition of the choice function, we have $X_2 \subsetneq X_1$, which, by substitutability, implies that $X_2 \in A(P''_a)$. Pick a contract $x \in X_1 \setminus X_2$. Since $P'$ is substitutable, it follows that $x \in C_a^{P'}(X_2 \cup \{x\})$. 
Observe that $X_2 \subseteq X_2 \cup \{x\} \subseteq X_1$.
 Thus, by the definition of \(X_2\) and the consistency property of the preference \(P''\), we have
\[
C_a^{P''}(X_2 \cup \{x\}) = C_a^{P''}(X_1) = X_2,
\]
which implies that 

\begin{equation}\label{ecu 1 propo aceptables iguales de todas las subpreferencias}
    x \notin C_a^{P''}(X_2\cup\{x\}).
\end{equation}

Since \(X_2 \subsetneq X_1\) and \(x \in X_1 \setminus X_2\), we have
\[
X_2 \cup \{x\} = (X_2 \cup \{x\}) \cap X_1.
\]
Because \(X_1 = C_a^{P'}(X_1)\), it follows that
\[
(X_2 \cup \{x\}) \cap X_1 = (X_2 \cup \{x\}) \cap C_a^{P'}(X_1).
\]
By Remark~\ref{definicion alternatica de contratos subsitutos con inclusiones},
\[
(X_2 \cup \{x\}) \cap C_a^{P'}(X_1) \subseteq C_a^{P'}(X_2 \cup \{x\}).
\]
By the definition of a choice function,
\[
C_a^{P'}(X_2 \cup \{x\}) \subseteq X_2 \cup \{x\}.
\]
Hence,
\[
C_a^{P'}(X_2 \cup \{x\}) = X_2 \cup \{x\},
\]
which implies that \(X_2 \cup \{x\} \in A(P')\). Since \(P' \sqsubseteq P\), we also have \(X_2 \cup \{x\} \in A(P)\), so
\[
X_2 \cup \{x\} = C_a^{P}(X_2 \cup \{x\}).
\]
Because \(P'' \sqsubseteq P\) and \(X_2 \cup \{x\} = C_a^{P}(X_2 \cup \{x\})\), it follows that
\(x \in C_a^{P''}(X_2 \cup \{x\})\), which contradicts \eqref{ecu 1 propo aceptables iguales de todas las subpreferencias}. Therefore, \(A(P'_a) = A(P''_a)\).

\item[\textbf{2. For each $\boldsymbol{X_1, X_2 \in A(P'_a)}$, if $\boldsymbol{X_1 = C_a^{P'}(X_1 \cup X_2)}$, then $\boldsymbol{X_1 = C_a^{P''}(X_1 \cup X_2)}$.}] 

Assume, for the sake of contradiction, that this is not the case, i.e., $X_1 \neq C_a^{P''}(X_1 \cup X_2).$
Let $X_3 = C_a^{P''}(X_1 \cup X_2).$
By the definition of the choice function, 
$X_3 \subseteq X_1 \cup X_2.$
We now analyze two possible cases.

\begin{description}
\item[\textbf{Case 1: $\boldsymbol{X_1\setminus X_3=\emptyset.}$}] 
Since $X_3 \neq X_1$, we have $X_1 \subsetneq X_3$. Hence, for each contract $x \in X_3 \setminus X_1$. 
Note that $C_a^{P'}(X_1) = X_1$, and since $A(P'_a) = A(P''_a)$, it follows that $C_a^{P''}(X_1) = X_1$. 
As $x \notin X_1$, we obtain $x \notin C_a^{P''}(X_1)$. 
By substitutability of $P''$, this further implies $x \notin C_a^{P''}(X_1 \cup X_2) = X_3$, 
contradicting the fact that $x \in X_3 \setminus X_1$. 
Therefore, this case cannot occur.

\item[\textbf{Case 2: $\boldsymbol{X_1\setminus X_3\neq\emptyset.}$}]
For each contract $x \in X_1 \setminus X_3$, we have $
x \in C_a^{P'}(X_3 \cup \{x\}).$
Let $X_4\subseteq X_3$ be such that $X_4 \cup \{x\} = C_a^{P'}(X_3 \cup \{x\}).$ Thus, by definition fo choice function, $X_4 \subseteq X_3$.  
By substitutability of $P'_a$, we get $X_4 \cup \{x\} = C_a^{P'}(X_4 \cup \{x\}),$
implying that $X_4 \cup \{x\} \in A(P'_a)$.  
Since $A(P'_a) = A(P''_a)$, we also have $X_4 \cup \{x\} = C_a^{P''}(X_4 \cup \{x\}).$
We now analyze two subcases:

\textbf{Subcase 2.1: $\boldsymbol{X_4 \subsetneq X_3}$.}  
Then there exists $x' \in X_3 \setminus X_4$.  
We have $x' \notin C_a^{P'}(X_3 \cup \{x\}) = X_4 \cup \{x\}.$
Since  $A(P'_a)=A(P''_a),$ we have that $x' \notin C^{P''}(X_3 \cup \{x\}).$
Moreover, since $X_4 \subsetneq X_3$ and the fact that $P''_a$ is substitutable, we get
\begin{equation}\label{eq:subcase31_not_in_choice}
    x' \notin C_a^{P''}(X_3 \cup \{x'\}).
\end{equation}
Note that $C_a^{P''}(X_1 \cup X_2) = X_3 \subseteq X_3 \cup \{x'\} \subseteq X_1 \cup X_2$.  
By the consistency of $P''_a$, $C_a^{P''}(X_3 \cup \{x\}) = C_a^{P''}(X_1 \cup X_2) = X_3.$
From \eqref{eq:subcase31_not_in_choice}, we get $x' \notin X_3$, contradicting the fact that $x' \in X_3 \setminus X_4$.

\textbf{Subcase 2.2: $\boldsymbol{X_4 = X_3}$.}  
By definition of $X_4$, and the subcase hypothesis, we have that $C_a^{P'}(X_3 \cup \{x\}) = X_3 \cup \{x\}.$
Thus, $X_3 \cup \{x\} \in A(P')$.  
Since $A(P') = A(P'')$, it follows that $X_3 \cup \{x\} = C^{P''}(X_3 \cup \{x\}).$
Hence, $x \in C_a^{P''}(X_3 \cup \{x\}) = C_a^{P''}(C_a^{P''}(X_1 \cup X_2) \cup \{x\}).$
By the path-independence property of $P''_a$, we have $
x \in C_a^{P''}(X_1 \cup X_2 \cup \{x\}).$
Since $x \in X_1$, this implies $x \in C_a^{P''}(X_1 \cup X_2) = X_3,$ contradicting the assumption that $x \in X_1 \setminus X_3$.
\end{description}
Since both subcases lead to a contradiction, Case 2 cannot occur.
\end{description}

Thus, in both cases, we reached a contradiction.  
This contradiction arises from assuming that 
$
X_1 \neq C_a^{P''}(X_1 \cup X_2)
\text{ whenever }X_1 = C_a^{P'}(X_1 \cup X_2).$
Therefore, it must be that 
\[
X_1 = C_a^{P''}(X_1 \cup X_2)
\text{ whenever }X_1 = C_a^{P'}(X_1 \cup X_2).\]
\end{proof}

\end{document}